\definecolor{llgray}{gray}{0.95}
\g@addto@macro\bfseries{\boldmath}
\newtheorem{theorem}{Theorem}
\newtheorem{corollary}{Corollary}
\newtheorem{lemma}{Lemma}
\newtheorem{proposition}{Proposition}
\theoremstyle{definition}
\newtheorem{definition}{Definition}[section]
\theoremstyle{definition}
\theoremstyle{definition}
\newtheorem{remark}{Remark}
\newtheorem{criterion}{Criterion}
\newtheorem{examples}{Examples}[section]
\theoremstyle{definition}
\newcommand{\ol}[1]{\overline{#1}}
\newcommand{\gr}[1]{\textcolor{gray}{#1}}
\title{On Low Rank Fusion Rings}
\author[1]{G. Vercleyen\footnote{\url{gert.vercleyen@protonmail.com}}}
\author[1,2]{J. K. Slingerland\footnote{\url{joost@thphys.nuim.ie}}}
\affil[1]{\small Department of Theoretical Physics, Maynooth University, Ireland.}
\affil[2]{\small Dublin Institute for Advanced Studies, School of Theoretical Physics, 10 Burlington Rd, Dublin, Ireland.}
\begin{document}
\maketitle
\begin{abstract}
  We present a method to generate all fusion rings of a specific rank and multiplicity. This method was used to generate exhaustive lists of fusion rings up to order 9 for several multiplicities. We introduce a class of non-commutative fusion rings based on a group with transitive action on a set. This generalizes the Tambara-Yamagami (TY) and Haagerup-Izumi (HI) fusion rings. We give an example of such a ring which is categorifiable and is not of TY or HI type. The structure of non-commutative fusion rings with a subgroup is reviewed, and the one-and two-particle extensions of groups are classified. A website containing data on fusion rings is introduced, and an overview of a Wolfram Language package for working with these rings is given.
\end{abstract}
\section{Fusion Rings}
Fusion rings, and the closely related fusion algebras, are structures that arise in various situations in mathematics and physics where a combination of two ‘simple objects’ returns an integer
amount of simple objects. In a physical setting, one typically encounters these structures when studying decompositions of tensor products of finite-dimensional representations, the addition of spin being a standard example.
Another prominent example is the study of anyon models, which have profound applications in topological quantum computing \cite{Freedman98,KITAEV20032,freedman2003topological,nayak2008non} and topological phases of matter.
An anyon model has a finite set $I$ of superselection sector labels called topological-or anyonic charges. These can be thought of as (generalised) conserved charges that distinguish between excitations of a physical system with different topological interactions. The topological charges can be combined (fused) according to a specific set of fusion rules,
\begin{equation}   \label{eq:fusionrules}
   a \times b = \sum_{c \in I} N_{ab}^c\ c,
\end{equation}
where
\begin{enumerate}[label=(\alph*)]
   \item the combination of charges is an associative operation, i.e. $(a \times b) \times c = a \times (b \times c)$,
   \item the fusion multiplicities $N_{ab}^c$ are natural numbers that indicate the number of ways the charges $a$ and $b$ can be combined to produce the charge $c$
   \item there exists a `vacuum charge' $1 \in I$ for which $N_{a1}^c = N_{1a}^c = \delta_{ac}$,
   \item every charge $a$ has a unique conjugate (or dual) charge $\ol{a}$ such that $N_{ab}^1 = \delta_{a\ol{b}} = \delta_{\ol{a}b}$ (from which it follows that $\ol{1} = 1$ and $\ol{\ol{a}} = a$), and
   \item $N_{ab}^c = N_{\ol{a}c}^b = N_{c\ol{b}}^a$ for all charges $a,b,c$.
\end{enumerate}

The last requirement, called Frobenius reciprocity, comes from the following idea. If we allow the charges to split as well as fuse, then the fusion of $a$ and $b$ to $c$ is equivalent to the splitting of $1$ in $\ol a$ and $a$ followed by the fusion of $a$ and $b$. Looking at this process we now find that $b$ splits into $\ol a$ and $c$.  The last requirement then means that $\ol a$ and $c$ can fuse to $b$ in the same number of ways that $a$ and $b$ can fuse to $c$. Note that this is a non-trivial assumption.

The mathematical structure that captures the properties of the examples above is that of a fusion ring. There are several, subtly different, definitions of fusion rings spread across the literature. To avoid confusion we will use the following definition.

\begin{definition}\label{def:fusionring}
   A $\mathbb{Z}_+$-ring is a unital ring $R$, finitely generated and free as a $\mathbb{Z}$-module, which is equiped with a distinguished basis $B$ such that $1 \in B$ and for which the structure constants $\{ N_{ab}^c \}$ are non-negative.

   A \textbf{fusion ring} $R \equiv (R,B,\ol{\cdot})$ is a $\mathbb{Z}_+$ ring with basis $B$ and a linear involution $\ol{\cdot}:a \mapsto \ol{a}$ such that $N_{a\ol{b}}^1 = \delta_{ab}$ and $N_{ab}^c = N_{\ol{a}c}^b$ for all $a,b,c \in B$.
   The size of $B$ is called the \textbf{rank of the fusion ring} and the number $\max\{N_{ab}^c\}$ is called the \textbf{multiplicity of the fusion ring}.
   Elements of a fusion ring are also called \textbf{charges} or \textbf{particles} and if $a$ is an element of a fusion ring then $\ol{a}$ is called the \textbf{dual} to $a$, or \textbf{conjugate} to $a$.
   We also say that $B$ generates $R$ and write $R = \langle B \rangle \equiv \langle \psi_1, \ldots, \psi_r \rangle$.
 \end{definition}

This definition is slightly stricter than the one given in \cite{Etingof2015} in the sense that we require Frobenius reciprocity but more general than, e.g., the one in \cite{Fuchs1994} since we allow a fusion ring to be non-commutative.
From now on we will reserve the notation $R$ for a fusion ring, $B$ for its basis, $r$ for its rank, $m$ for its multiplicity, and $N_{ab}^c$ for the structure constants.

The defining properties of a fusion ring are only a slight generalization of those that define a group. Indeed, any group ring is a fusion ring and for any fusion ring $R$, demanding that $a \times \ol{a} = 1$ for all $a \in R$ comes down to demanding that $R$ is a group ring. This means that classifying fusion rings is at least as hard as classifying finite groups and there is little hope that fusion rings will be classified soon.

Instead of attempting to classify all fusion rings, we present an algorithm to find all fusion rings of small rank and multiplicity. The algorithm, together with some of the results, is the main topic of section \ref{s:FindingFusionRings}. Section \ref{s:charactersandmoddata} deals with methods for finding fusion ring characters and modular data for commutative fusion rings. In section \ref{s:categorifiability} several categorifiability criteria from \cite{Liu2021fusion}, \cite{Liu2022triangular} and \cite{Liu2020} are stated and applied to the fusion rings found. In section \ref{s:noncomrings} we generalize the Tambara-Yamagami and Haagerup-Izumi constructions of fusion rings and discuss some of the structure of non-commutative fusion rings. The final section \ref{s:toolsforworkingwithrings} introduces the \textit{AnyonWiki}, a website containing fusion rings with their properties, and FusionRings.wl: a Wolfram Language package for working with fusion rings.

\section{Finding Fusion Rings} \label{s:FindingFusionRings}
\subsection{Algorithm}
Several methods exist for finding fusion rings using a computer.
One could, e.g., try using brute force to generate all integer rings of a certain rank and
multiplicity and filter those that do not satisfy the requirements of a fusion ring.
Even after breaking symmetry and reducing the number of variables (see subsections \ref{ss:reducingnofvars} and \ref{ss:breakingpermsym}) this method quickly becomes unfeasible, as can be seen in table \ref{tab:sizesearchspace}.
\begin{table}[h]
      \[
      \begin{array}{|c|ccccccc|}
  \hline
 & 3 & 4 & 5 & 6 & 7 & 8 & 9 \\
 \hline
 1 & 2.0\times 10^1 & 1.2\times 10^3 & 1.1\times 10^6 & 3.5\times 10^{10} & 7.3\times 10^{16} & 1.9\times 10^{25} & 1.3\times 10^{36} \\
 2 & 9.0\times 10^1 & 6.1\times 10^4 & 3.5\times 10^9 & 5.0\times 10^{16} & 5.2\times 10^{26} & 1.2\times 10^{40} & 1.8\times 10^{57} \\
 3 & 2.7\times 10^2 & 1.1\times 10^6 & 1.1\times 10^{12} & 1.2\times 10^{21} & 5.2\times 10^{33} & 3.7\times 10^{50} & 1.8\times 10^{72} \\
 4 & 6.5\times 10^2 & 9.8\times 10^6 & 9.6\times 10^{13} & 2.9\times 10^{24} & 1.4\times 10^{39} & 5.2\times 10^{58} & 7.5\times 10^{83} \\
 5 & 1.3\times 10^3 & 6.1\times 10^7 & 3.7\times 10^{15} & 1.7\times 10^{27} & 3.8\times 10^{43} & 2.3\times 10^{65} & 2.4\times 10^{93} \\
 6 & 2.5\times 10^3 & 2.8\times 10^8 & 8.0\times 10^{16} & 3.8\times 10^{29} & 2.1\times 10^{47} & 9.7\times 10^{70} & 2.6\times 10^{101} \\
 7 & 4.2\times 10^3 & 1.1\times 10^9 & 1.2\times 10^{18} & 4.1\times 10^{31} & 3.7\times 10^{50} & 7.2\times 10^{75} & 2.3\times 10^{108} \\
 8 & 6.6\times 10^3 & 3.5\times 10^9 & 1.2\times 10^{19} & 2.5\times 10^{33} & 2.7\times 10^{53} & 1.4\times 10^{80} & 3.2\times 10^{114} \\
 \hline
\end{array}

      \]
  \caption{Size of the search space of fusion rings of rank $r$ (columns) and multiplicity $m$ (rows) with 2 significant digits after reduction of the number of variables and symmetry breaking.}
  \label{tab:sizesearchspace}
\end{table}
Another strategy (see \cite{Gepner1995}) consists of simultaneously
diagonalising the fusion matrices of the particles, which is always possible if all
fusion matrices commute but this method becomes cumbersome for non-commutative fusion
rings. Instead, we built an algorithm based on a backtracking approach, or tree-search.
Before delving into the details of the algorithm we first present some general results
and techniques that were applied to make the task more tractable.

\subsection{Reducing the Number of Variables}
\label{ss:reducingnofvars}
There are relations between the structure constants that can be used to greatly reduce the number of variables. From the definition of a fusion ring, it follows that
\begin{equation}
   N_{ab}^c = \sum_{e=1}^r N_{ab}^e N_{e\ol{c}}^1 = \sum_{f=1}^r N_{af}^1 N_{b\ol{c}}^f = N_{b\ol{c}}^{\ol{a}}.
   \label{eq:pivotal1}
\end{equation}
Combined with the relations $N_{ab}^c = N_{\ol{a}c}^b$ we obtain
\begin{equation}
   N_{ab}^c = N_{\ol{a}c}^b = N_{c\ol{b}}^a = N_{b\ol{c}}^{\ol{a}} = N_{\ol{c}a}^{\ol{b}} = N_{\ol{b} \ol{a}}^{\ol{c}},
   \label{eq:pivotal}
\end{equation}
which we will call \textit{pivotal relations}.
By using the pivotal relations, a reduced set of fusion coefficients can be obtained. Since these relations depend on the number $s$ of self-dual particles, we will assume from here on that we are searching for fusion rings with a fixed value for $s$. One only needs to apply the algorithm to each of the $\left\lfloor \frac{r+1}{2} \right\rfloor$ values of $s$ to find all fusion rings of rank $r$ and multiplicity $m$.

\subsection{Breaking Permutation Symmetry}
\label{ss:breakingpermsym}
Any fusion ring is fully determined by a set of structure constants $\{N_{ab}^c\}$. Therefore
the search for fusion rings can be reduced to filling $3$D tables with natural numbers such that the defining properties like associativity, unitality, etc are apparent.
When expressing structure constants using tables, however, a labeling of the elements of
the basis $B$ is implicitly made. Any relabeling of the elements of $B$ results in a table of
structure constants that describe the same ring. In particular, for every fusion ring of rank $r$ there
are up to $r!$ different, yet equivalent, tables of structure constants. This redundancy greatly
increases the amount of work when searching for fusion rings by constructing multiplication tables.

One way to break this symmetry slightly is by numbering the elements of the basis and demanding that the
first element is the unit element. We can break the symmetry further by demanding that all self-dual
elements appear before the non-self-dual elements and all non-self-dual elements are grouped in pairs.
To break the symmetry even further we added a set of constraints on the structure constants to the set of associativity relations.
To explain the idea behind these constraints we will first assume that all particles are self-dual and later generalize to generic fusion rings. The constraints are built up by looking at a particle that is not the unit, demanding it will be the $2$nd particle and sorting the other particles \textit{based on their fusion} with this particle. Doing so gives a candidate for the $3$rd particle. Then we demand that all particles, apart from the $1$st, $2$nd, and $3$rd particle, whose positions are not uniquely fixed by their fusion with the $2$nd particle are sorted by their fusion with the $3$rd particle. A candidate for the $4$th particle is then given and we can continue this way until all particle labels are fixed. To apply this scheme we need to define an ordering of a set of particles $\{\psi_{i}\}$ that is solely based on their fusion with a given particle, say $\psi_{a}$. For this, we used the function $\iota_{a}:i \mapsto N_{ai}^i$. Practically this means that we demand that for particle $\psi_{2}$ the following set of inequalities
\begin{equation}
  \left\{ \iota_{2}(j) \leq \iota_{2}(j+1) | j = 3, 4, \ldots, r-1 \right\}
\end{equation}
holds, where for some particles $\psi_{i},\psi_k$ we still might have that $\iota_{2}(i) = \iota_{2}(k)$. Those particles are then sorted further using the values of $\iota_{3}$, which yields the following set of inequalities
\begin{equation}
  \left\{\neg (\iota_{2}(j) = \iota_{2}(j+1)) \vee (\iota_{3}(j) \leq \iota_{3}(j+1)) | j = 4, 5, \ldots, r - 1 \right\}.
\end{equation}
Applied to all particles $\{ \psi_{2}, \ldots, \psi_r \}$ we get
\begin{equation}\label{eq:sortedconstraints}
  \bigcup_{i = 2}^r \left\{ \left. \neg \left(\bigwedge_{n = 2}^{i-1} \iota_{n}(j) = \iota_{n}(j+1)\right) \vee \iota_{i}(j) \leq \iota_{i}(j+1)  \right| j =  i+1, \ldots, r - 1  \right\}.
\end{equation}
Although this scheme breaks some symmetry it is clear that there is still redundancy in the choice of the $2$nd particle. Furthermore, there might also be multiple choices for the $3$rd particle since the order of the particles, determined by fusion with the $2$nd particle, is not strict. The same is true for the $4$th particle and so on. To reduce this redundancy we demand that the $2$nd particle should be such that
\begin{equation}\label{eq:extraconstraintp2}
  \left\{ \left. \sum_{i = 1}^r \iota_{2}(i) \geq \sum_{i = 1}^r \iota_k(i) \right| k = 3, \ldots, r  \right\}.
\end{equation}
For the $3$rd, $4$th, and other particles a similar set of inequalities can be constructed but we can only compare with particles that could not be distinguished by all previous particles. It turns out that these extra constraint become so tedious that taking them into account only increases computation time. Therefore we only added constraints (\ref{eq:extraconstraintp2}) to the set of constraints.

Now assume not all particles are self-dual. In this case, we can still apply the symmetry breaking but should do so separately for the self-dual particles and non-self-dual particles. This must be done in such a way that no comparison takes place between structure constants of self-dual particles and non-self-dual particles. Note that for the non-self-dual particles, because of the pivotal relations, the constraints above will not put an ordering between dual particles. Therefore, only pairs of particles can be sorted this way.

\subsection{Backtracking}\label{ss:backtracking}
\label{ss:backtrackingalgorithm}
Under the assumption that the techniques described above have been applied one should now have (a) a reduced set of unknowns, say $\{n_i\}$, and (b) an enlarged set of constraints consisting of a set of associativity relations together with a set of inequalities (forcing a non-strict order on the particles).
The remaining procedure then consists of finding solutions to the system of constraints. It is here that a backtracking approach naturally arises since for any constraint only partial information about the values of the $n_i$ is needed to check its validity. We can therefore do a tree-search for admissible solution sets. The branches of the tree at level $j$ correspond to a choice of $n_j$. The leaves at all but the last level correspond to invalidated constraints and the leaves at the last level correspond to valid solutions.
A graphical depiction of this process is shown in figure \ref{fig:backtracking}.

\begin{figure}[h]
\centering
\includegraphics[width=\textwidth]{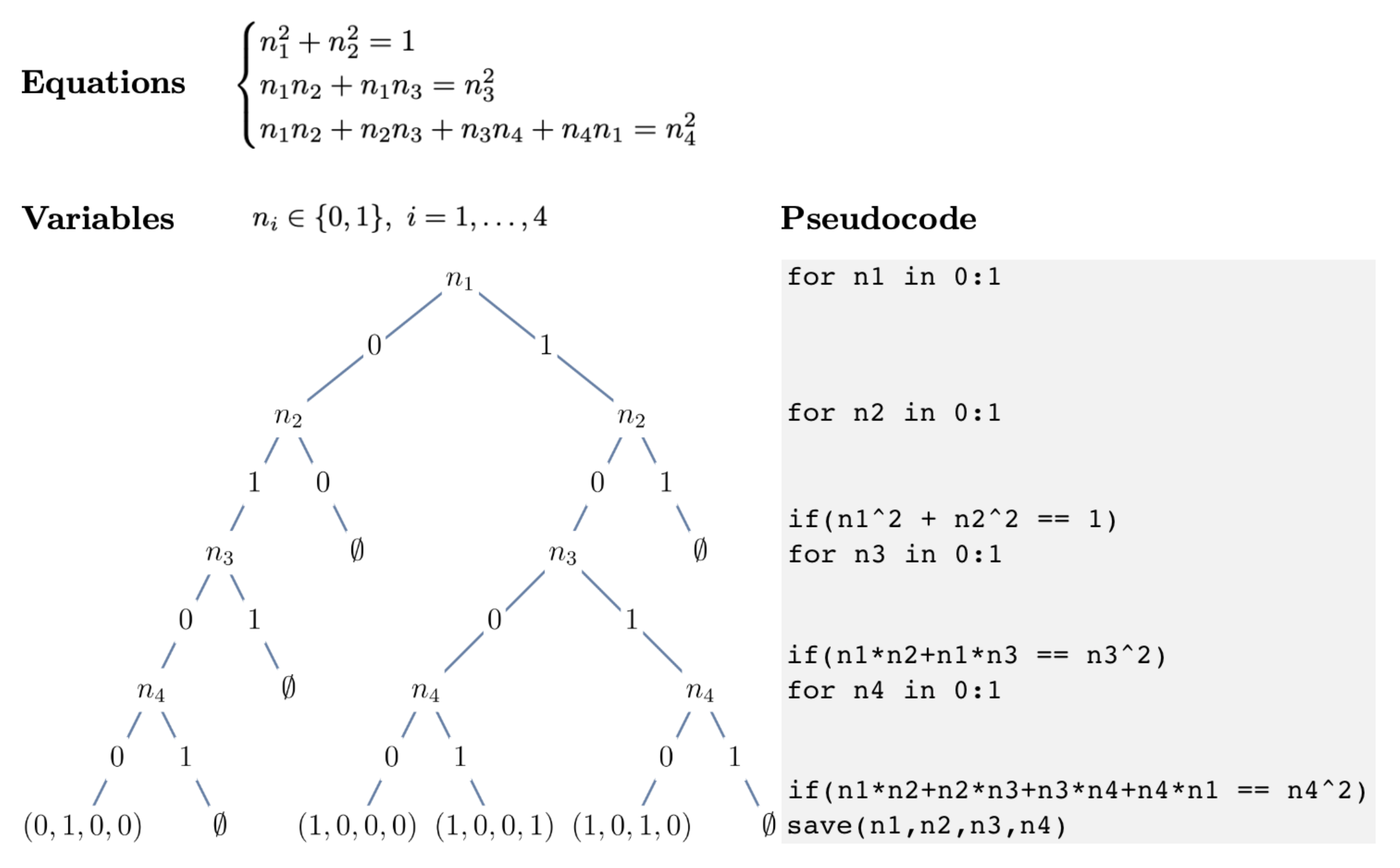}
\caption{Procedure for finding solutions to a system of constraints applied to a system of $3$ equations in $4$ unknowns with values in $\{0,1\}$. Note that the order in which the constraints are checked is vital for the performance.}
\label{fig:backtracking}
\end{figure}

Every time a constraint is violated, $m^K$ configurations (where $K$ denotes the number of remaining unknowns) are ruled out. The order in which the unknowns are given values is thus of great importance. To cut off branches of the search tree as soon as possible we sorted the unknowns in the following way.
First, we look for the constraints with the fewest number of different unknowns. If there are multiple, pick any as a first constraint. Then regard all the unknowns in this constraint as known and proceed to pick a second constraint with the least number of unknowns (thus after removing the unknowns from the first constraint from all other constraints). Keep repeating this procedure until no constraints remain.
Now group the constraints together in sets $C[i]$, $i = 1, \ldots, k$, such that every constraint in $C[i]$ requires the same set of new unknowns to be validated. For each set $C[i]$ construct a set $V[i] = \{V[i,1], V[i,2], \ldots, V[i,l_i] \}$ that contains the $l_i$ unknowns in $C[i]$, given that all the unknowns in $V[i-1], \ldots, V[1], \emptyset $ are known.
The following code then finds all fusion rings of rank $r$, multiplicity $m$, and number of self-dual particles $s$:
\begin{lstlisting}
for V[1,1] in 0:m, V[1,2] in 0:m,...,V[1,l_1] in 0:m
  if( all constr in C[1] are verified )
  for V[2,1] in 0:m, V[2,2] in 0:m,...,V[2,l_2] in 0:m
    if( all constr in C[2] are verified )
    ...
      for V[k,1] in 0:m, V[k,2] in 0:m,...,V[k,l_k] in 0:m
        if( constraints in C[k] are verified )
        saveSol( { V[1,1], V[1,2], ..., V[k,l_k] } )
\end{lstlisting}

\vspace{.3cm}
A few remarks are in place.
\begin{itemize}
  \item Because of the nature of the above algorithm, different code must be created for each rank, multiplicity, and number of self-dual particles separately. To accommodate this need we used the Wolfram Language to generate the polynomial constraints, reduce the amount of variables, break the symmetry, sort the equations and unknowns, and create, compile, and execute corresponding C source code. The Wolfram Language code we constructed and used can be found in the attached file ``SearchForFusionRings.wl''.
  \item It is important to note that it is very easy to add other specific constraints on the fusion rings. Since the constraints are sorted purely based on the number of variables, any constraint on the structure constants can easily be added and sorted with the rest. One could, e.g., add constraints on the maximum number of non-zero structure constants per particle or put a bound on several structure constants.
  \item Just like \cite{Krajecki2008} pointed out it is very hard to benchmark the code since the performance crucially depends on how efficient cache memory is used. Sometimes the CPU decides to swap between cache and RAM and computation times get multiplied with a factor (that can go up to $100$ or even higher).
\end{itemize}

\subsection{Results}
Using the algorithm described in subsection \ref{ss:backtracking}, a total of $28451$ fusion rings were found, of which $353$ are multiplicity-free and $118$ are non-commutative. These results were obtained without the use of a high-end machine. A summary of the number of rings per rank and multiplicity is given in table \ref{tab:fusionringspermultrank}. A more detailed summary for fusion rings with $r \geq 6$ is given in table \ref{tab:FRperrankmultdetailed}.
\begin{table}
   \centering
      \begin{tabular}{cc}
    &
   \begin{tabular}{c}
   Rank \\
   \end{tabular}
   \\
   \begin{tabular}{c}
      \rotatebox{90}{Multiplicity}\hspace{-1em} \\
   \end{tabular}
   &
   \begin{tabular}{|c|ccccccccc|}
      \hline
      & 1 & 2 & 3 & 4 & 5 & 6 & 7 & 8 & 9 \\
      \hline
   1  & 1 & 2 & 4  & 10  & 16       &     39  &      43  &      96  &      142 \\
   2  & 0 & 1 & 3  & 17  & 37   &     154 &      319 &      \gr{874+}& \\
   3  & 0 & 1 & 4  & 24  & 82   &     384 &     \gr{562+} &      &      \\
   4  & 0 & 1 & 6  & 45  & 134  &     872 &     \gr{1236+} &     &      \\
   5  & 0 & 1 & 5  & 55  & 209  &    \gr{533+} &      &          &      \\
   6  & 0 & 1 & 9  & 81  & 336  &    \gr{872+} &      &          &      \\
   7  & 0 & 1 & 6  & 92  & 477  &    \gr{976+} &      &          &      \\
   8  & 0 & 1 & 10 & 137 & 733  &    \gr{1672+} &     &          &      \\
   9  & 0 & 1 & 12 & 151 & 1463 &         &          &          &      \\
   10 & 0 & 1 & 9  & 186 & 1794 &         &          &      &          \\
   11 & 0 & 1 & 10 & 238 & 2283 &         &          &      &          \\
   12 & 0 & 1 & 20 & 291 & 3049 &         &          &      &          \\
   13 & 0 & 1 & 9  & 246 & \gr{1300+} &     &          &      &          \\
   14 & 0 & 1 & 13 & 340 & \gr{1323+} &     &          &      &          \\
   15 & 0 & 1 & 16 & 349 & \gr{1550+} &     &          &      &          \\
   16 & 0 & 1 & 25 & 525 & \gr{1925+} &     &          &      &          \\
   \hline
   \end{tabular}
   \\
   \end{tabular}

   \caption{Table of total number of fusion rings per rank and multiplicity. The gray numbers with a + indicate partial results from an incomplete search.}
   \label{tab:fusionringspermultrank}
\end{table}
\begin{table}
\[
   \begin{array}{|c|cccccccccccccccc|}
\hline
 & 6^0 & 6^2 & 6^4 & 7^0 & 7^2 & 7^4 & 7^6 & 8^0 & 8^2 & 8^4 & 8^6 & 9^0 & 9^2 & 9^4 & 9^6 & 9^8 \\
\hline
1 & 20 & 9 , 2   & 8 & 18 & 14 , 3 & 7 & 1 & 38 & 17 , 13 & 3 , 15 & 7 , 3 & 46 & 34 , 11 & 12 , 21 & 13 , 3 & 2 \\
2 & 13 & 37 , 2  & 102 & 2 & 32 & 86 , 5 & 194 & & & & & & & & & \\
3 & 16 & 81 , 1  & 286 & & & & & & & & & & & & & \\
4 & 17 & 151 , 1 & 703 & & & & & & & & & & & & & \\
\hline
\end{array}

\]
\caption{Table of fusion rings per multiplicity (rows) and rank (columns), where the rank is subdivided in sets $r^i$ with $i$ the number of non-self-dual particles. Numbers separated by a comma, $a,b$ indicate that there are $a$ commutative rings and $b$ non-commutative rings.}
\label{tab:FRperrankmultdetailed}
\end{table}

\begin{figure}[h!]
\centering
\includegraphics[width=\textwidth]{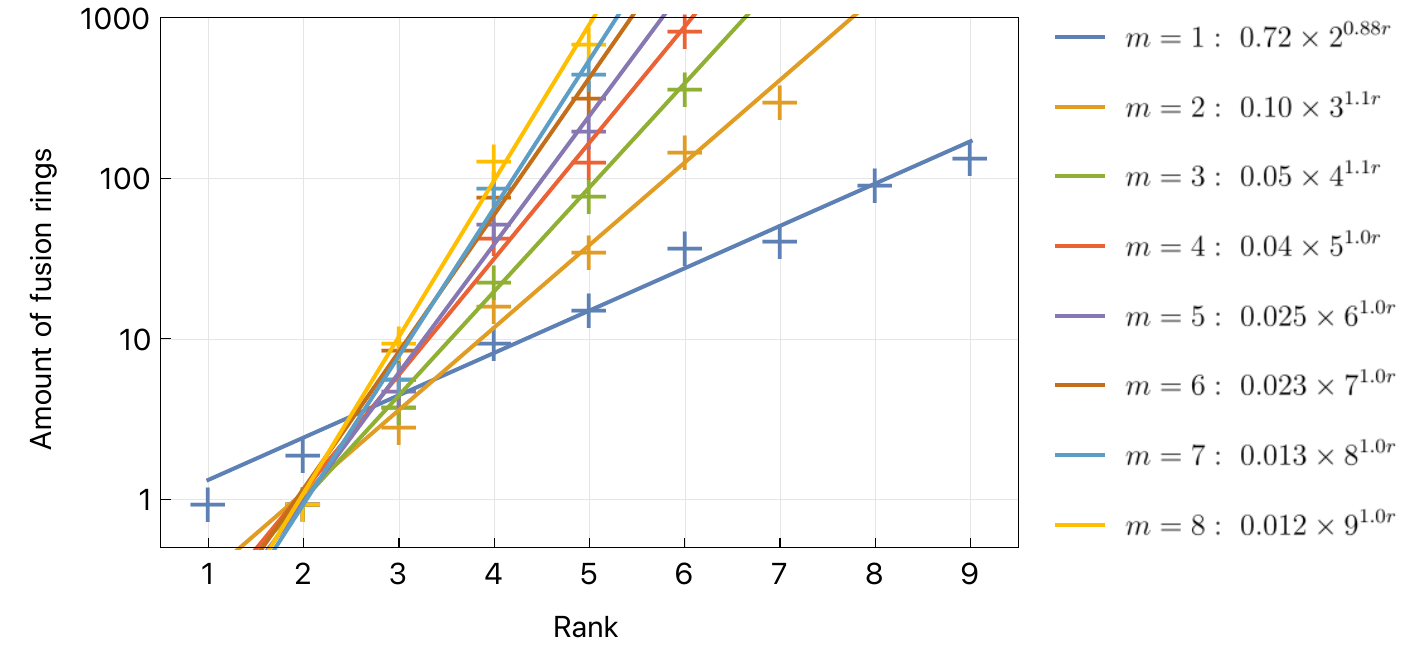}
\caption{The number of fusion rings per rank. Each line represents a least squares fit of the function $a(m+1)^{b r}$.}
\label{fig:fusionringsperrankmult}
\end{figure}
A complete list of all the fusion rings with multiplication tables and other properties can be found as a part of the FusionRings.wl Wolfram Language package, developed by the authors (see subsection \ref{ss:fusionrings.wl}). The fusion matrices can be found in the attached file ``FusionRingsMultiplicationTables''.
The number of fusion rings per rank and multiplicity, together with least-squares fits, are shown in figure \ref{fig:fusionringsperrankmult}.
\begin{remark}
   It seems that the number of fusion rings for a given multiplicity $m$ can be approximated well by a function of the form $a_{m}\times (m+1)^{b r}$, where $a_{m}$ is a constant and $b \approx 1$. Whether the number of fusion rings with multiplicity $m$ grows assymptotically as $(m+1)^{r}$ for all $m$ (or as $m$ becomes larger as well) is unknown to the authors.
\end{remark}

\section{Characters and Modular Data of Commutative Fusion Rings}\label{s:charactersandmoddata}
\subsection{Characters of commutative Fusion Rings}
Finding a character table of commutative fusion rings comes down to finding a matrix $V$ that simultaneously diagonalizes the fusion matrices $[N_k]^j_i$ \cite{Fuchs1994}. A simple way to do this is by taking a linear combination $M = \sum_{k=1}^r c_k [N_k]$, where the $c_k$ are random real numbers in an interval $[a,b]$, and finding the eigenspace of $M$. There are values of $c_k$ for which this method provides an incorrect answer but it is always easy to test whether this is the case by performing the diagonalisation. When an incorrect result is returned a new set of $c_k$ can be chosen at random until a set that works is found. The following lemma shows that the set of vectors $\vec{c}$ for which the procedure above fails is a subset of a strict sub-vector space of $\mathbb{C}^r$.
\begin{proposition} Let $\{M_k\}$ be a set of simultaneously diagonalisable $n\times n$-matrices. Let $c\in \mathbb{C}^n$ be a vector for which there exists a matrix $V$ that diagonalises $M:=\sum_{k=1}^n c_k M_k$ but does not diagonalise each $M_k$ individualy. The set of vectors $\{c\}$ for which this property holds is a subset of a strict subspace of $\mathbb{C}^n$.
\end{proposition}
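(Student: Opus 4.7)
The plan is to reduce to the joint eigenbasis of $\{M_k\}$. Since the $M_k$ are simultaneously diagonalisable, there is a decomposition $\mathbb{C}^n = \bigoplus_\alpha E_\alpha$ into joint eigenspaces, with each $M_k$ acting on $E_\alpha$ as a scalar $\lambda_{k,\alpha}$, and with the eigenvalue tuples $(\lambda_{1,\alpha},\ldots,\lambda_{n,\alpha})$ pairwise distinct across different $\alpha$. On $E_\alpha$, the linear combination $M = \sum_k c_k M_k$ acts as the scalar $\mu_\alpha(c) := \sum_k c_k \lambda_{k,\alpha}$.

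The key observation is: if the scalars $\mu_\alpha(c)$ are pairwise distinct as $\alpha$ varies, then the eigenspaces of $M$ are exactly the $E_\alpha$. In this case, any invertible $V$ whose columns are eigenvectors of $M$ must send basis vectors into the various $E_\alpha$; since each $M_k$ is scalar on every $E_\alpha$, such a $V$ automatically diagonalises all the $M_k$ simultaneously. Contrapositively, if some $V$ diagonalises $M$ but fails to simultaneously diagonalise the $M_k$, there must exist $\alpha \neq \beta$ with $\mu_\alpha(c) = \mu_\beta(c)$.

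Consequently the bad set is contained in $\bigcup_{\alpha \neq \beta} H_{\alpha,\beta}$, where $H_{\alpha,\beta} = \{c \in \mathbb{C}^n : \sum_k c_k(\lambda_{k,\alpha} - \lambda_{k,\beta}) = 0\}$. Because the tuples $(\lambda_{k,\alpha})_k$ and $(\lambda_{k,\beta})_k$ differ for $\alpha \neq \beta$, the coefficient vector defining $H_{\alpha,\beta}$ is non-zero, and $H_{\alpha,\beta}$ is a proper linear subspace (indeed a hyperplane) of $\mathbb{C}^n$. The conclusion of the proposition follows: every offending $c$ lies in at least one strict subspace $H_{\alpha,\beta}$, and the bad set as a whole is contained in the proper algebraic subset obtained by taking the finite union.

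I do not expect any serious obstacle; the argument is essentially a standard consequence of simultaneous diagonalisability. The one mild subtlety is interpreting the phrase ``strict subspace'' in the statement — the bad set is a finite union of hyperplanes rather than a single linear subspace — but the content that generic $c$ work, which is what is needed for the diagonalisation procedure described before the proposition, is recovered from exactly the hyperplane description above.
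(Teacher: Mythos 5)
Your proof is correct, but it takes a genuinely different route from the paper's. The paper argues directly with the witnessing matrix $V$: writing $[V^{-1}MV]^i_j = \sum_k c_k [V^{-1}M_k V]^i_j = 0$ for $i \neq j$, and noting that since $V$ fails to diagonalise some $M_k$, at least one off-diagonal entry $[V^{-1}M_k V]^i_j$ is non-zero, so $c$ satisfies a non-trivial linear equation and hence lies in a strict subspace. That argument is shorter, but the subspace it produces depends on $V$, and since different bad vectors $c$ may require different witnesses $V$ (of which there are uncountably many), the paper's proof by itself only shows that each bad $c$ lies in \emph{some} strict subspace, not that the bad set is uniformly small. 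Your argument via the joint eigenspace decomposition $\mathbb{C}^n = \bigoplus_\alpha E_\alpha$ repairs exactly this: the hyperplanes $H_{\alpha,\beta} = \{c : \sum_k c_k(\lambda_{k,\alpha}-\lambda_{k,\beta}) = 0\}$ depend only on the family $\{M_k\}$, and your key lemma (if the $\mu_\alpha(c)$ are pairwise distinct, every diagonaliser of $M$ respects the $E_\alpha$ and hence diagonalises all $M_k$) shows the entire bad set is contained in the \emph{finite} union $\bigcup_{\alpha\neq\beta} H_{\alpha,\beta}$, a proper algebraic (measure-zero) subset. This is the statement actually needed to justify the random-coefficient procedure described before the proposition. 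You are also right about the mild mismatch with the literal wording: a finite union of distinct hyperplanes is not contained in a single strict subspace, so your formulation is the more accurate one; the same looseness is present in the paper's statement, where the strict subspace implicitly varies with $V$.
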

\begin{proof} First we note that for any $c \in \mathbb{C}^n$ there always exists a matrix $V$ that diagonalises all $M_k$ and therefore $M$ as well. Now assume there exists a $V$ that diagonalises $M$ but not every $M_k$. We then have that
  \begin{equation}
    [V^{-1}MV]^i_j = \sum_{k=1}^n c_k [V^{-1}M_k V]^i_j = 0, \text{ for } i \neq j.
  \end{equation}
  Since not all $[V^{-1}M_k V]^i_j$ are $0$ this is a system of linear equations in the variables $c_k$ with at least $1$ non-trivial equation.
\end{proof}
Whether the character table contains
\begin{enumerate}
  \item symbolic expressions, or
  \item approximate floating point numbers
\end{enumerate}
depends on the technique for finding eigenspaces. For the first case, we were able to find exact character tables for $28227$ out of the $28333$ commutative rings. In the second case, we found character tables for all commutative fusion rings where each character is expressed with $99$ significant digits.
In both cases, we checked diagonalisation by performing matrix multiplications such that the matrix elements in the final results were correct up to $99$ significant digits.
\subsection{Modular Data}
In the context of anyon models, not all rings are considered interesting. The interesting rings are those that can be categorified to a modular tensor category (MTC) or unitary modular tensor category (UMTC).
These categories provide necessary data to make predictions about processes involving anyons such as fusion, splitting, braiding, etc. Not every fusion ring can be categorified to an MTC or UMTC, but if they can then some of the structure of their categories can be extracted without the need for categorification. Since categorification of a fusion ring can be quite hard to do, any information that can be extracted beforehand is more than welcome. In particular, any MTC or UMTC has an associated $S$-matrix and $T$-matrix that can be constructed using only the fusion ring. Therefore if no such $S$ and $T$ exist we know there can be no MTC or UMTC associated with the fusion ring. The code we used to find $S$-matrices and $T$-matrices can be found in the file ``CharactersAndModularData.wl''.

\subsubsection{Finding $S$-matrices}
An $S$ matrix associated to a fusion ring $R$ is a square, symmetric, unitary matrix that diagonalises the set of matrices $\{N_i\}_{i=1}^r$, and satisfies
\begin{IEEEeqnarray}{rCl}
  \left[ S^2\right]^i_j & = & N_{ij}^1 \\
  \left[ S \right]^1_i & = & d_i,\quad i = 1,\ldots,r,
\end{IEEEeqnarray}
where $d_i$ is the Frobenius-Perron eigenvalue of $[N_i]$.
Once the character table of a commutative fusion ring is found, it is easy to construct $S$-matrices belonging to the ring. Indeed: we know that the rows of the character table are the simultaneous eigenvectors of the fusion matrices $[N_i]$, and likewise the rows and columns of the $S$-matrix consist of simultaneous eigenvectors of the $[N_i]$. By permuting the rows of the character table and rescaling them such that $S_{1i}=S_{i1}$, one can find all valid $S$-matrices starting from the character table.
\subsubsection{Finding $T$-Matrices}
A $T$-matrix belonging to a fusion ring $R$ is a square diagonal matrix for which there exists a $\lambda\in\mathbb{C}\backslash\{0\}$ such that
\begin{equation}\label{eq:modularrepresentation}
    (ST)^3 = \lambda S^2,
\end{equation}
i.e.\ the $S$-and $T$ matrices form a representation of the modular group. Instead of solving equations (\ref{eq:modularrepresentation}) directly we can use a theorem from Vafa\cite{VAFA1988421}:
\begin{theorem}
  Let $T$ be a $T$-matrix, with diagonal entries $\theta_{i}$, belonging to a fusion ring $R$ then the following (Vafa) equations hold
  \begin{IEEEeqnarray}{rCl}\label{eq:vafa}
    \theta_1 &=& 1 \\
    \left(\theta_{i}\theta_{j}\theta_k\theta_{l}\right)^{\sum\limits_{n =1}^r N_{ij}^{\ol{n}}N_{kl}^n} &=& \prod_{n=1}^r \theta_{n}^{N_{ij}^n N_{ln}^{\ol{k}} + N_{jk}^nN_{ln}^{\ol{i}} + N_{ik}^nN_{ln}^{\ol{j}}}, \quad i,j,k,l = 1,\ldots,r,
  \end{IEEEeqnarray}
\end{theorem}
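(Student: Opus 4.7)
The plan is to derive the Vafa equations as a consequence of $(ST)^3 = \lambda S^2$ combined with the fact that $S$ simultaneously diagonalises the fusion matrices $[N_i]$, so that the characters $\chi_a(i) := S_{ia}/S_{1a}$ satisfy the Verlinde-type identity $\chi_a(i)\chi_a(j) = \sum_k N_{ij}^k \chi_a(k)$. This is the basic algebraic gadget that lets me convert products of characters into sums weighted by structure constants, and it will appear twice in the derivation.

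The normalisation $\theta_1 = 1$ is essentially a choice of scale: the relation $(ST)^3 = \lambda S^2$ is invariant under $T \mapsto \mu T$, $\lambda \mapsto \mu^3 \lambda$, and the Vafa equations are homogeneous in a way compatible with fixing $\theta_1 = 1$, so I would impose this normalisation first and proceed.

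For the main identity, I would first extract from $(ST)^3 = \lambda S^2$ the ``ribbon'' relation
\begin{equation}
\sum_a d_a \theta_a^{-1}\chi_a(i)\chi_a(j) \;=\; \lambda^{-1}\,\theta_i \theta_j \sum_k N_{ij}^k\, d_k\, \theta_k^{-1},
\end{equation}
obtained by comparing appropriate matrix entries of both sides and using $S_{1a}=d_a$, $(S^2)_{ij}=\delta_{i\bar j}$, and $S^\top = S$. This relation controls how the twist transforms under a single fusion. I would then iterate it on a four-particle product: resolving $(i\otimes j)\otimes(k\otimes l)$ through an intermediate channel $n$ gives, after two applications of the ribbon relation and one use of Verlinde, a first expression in the $\theta$'s; resolving $(l\otimes i)\otimes(j\otimes k)$ yields a second, and $(l\otimes j)\otimes(i\otimes k)$ a third. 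Combining the three expressions produces a polynomial identity in which the exponents on each $\theta_n$ are exactly the three fusion counts $N_{ij}^n N_{ln}^{\bar k}$, $N_{jk}^n N_{ln}^{\bar i}$, $N_{ik}^n N_{ln}^{\bar j}$ on the right of the Vafa equation, while the factors $\theta_i,\theta_j,\theta_k,\theta_l$ combine on the left with exponent $\sum_n N_{ij}^{\bar n} N_{kl}^n = \dim V_{ijkl}^1$. The pivotal relations \eqref{eq:pivotal} are what guarantee that the three orderings produce this particular symmetric pattern of exponents.

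The main obstacle is the combinatorial bookkeeping of the three resolutions. The identity is most transparent in the categorical picture, where both sides of the Vafa equation compute the trace of the full double-braid on the four-point fusion space $V_{ijkl}^1$ expressed in different fusion bases related by $F$-moves, and the exponents are fusion-space dimensions. Translating that picture back into pure linear algebra on the $S,T$ data requires care to track duals, orderings, and pivotality at every step; that is where I expect the bulk of the work to sit, and where adapting Vafa's original trace argument is the cleanest route.
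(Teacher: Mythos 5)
The paper offers no proof of this statement at all: it is imported verbatim from Vafa's 1988 paper \cite{VAFA1988421}, so your sketch has to be judged against the known argument rather than anything in the text. Measured that way, there is a genuine gap in your middle step. From $(ST)^3=\lambda S^2$ one does correctly extract the entrywise relation $STS=\lambda T^{-1}ST^{-1}$, i.e.\ $\sum_a S_{ia}\theta_a S_{aj}=\lambda\,\theta_i^{-1}\theta_j^{-1}S_{ij}$, and combining it with the Verlinde identity $\chi_a(i)\chi_a(j)=\sum_k N_{ij}^k\chi_a(k)$ yields twist-weighted character sums of the kind you write down. But all such identities are \emph{additive}: they are trace-type statements, linear in the $\theta_n$'s channel by channel. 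The Vafa equation is \emph{multiplicative}, with each $\theta_n$ carrying the integer exponent $N_{ij}^nN_{ln}^{\ol{k}}+N_{jk}^nN_{ln}^{\ol{i}}+N_{ik}^nN_{ln}^{\ol{j}}$, and no iteration or resummation of trace identities produces exponents that are dimensions of fusion spaces. So what you call the ``combinatorial bookkeeping'' obstacle is not bookkeeping --- the route fails. The correct mechanism, which your last paragraph almost names, is the \emph{determinant}, not the trace: on the four-point space $V=\mathrm{Hom}(\mathbf{1},\,i\otimes j\otimes k\otimes l)$ the Dehn twist about a curve enclosing the punctures $i,j$ acts block-diagonally with eigenvalue $\theta_n$ on a block of dimension $N_{ij}^nN_{ln}^{\ol{k}}$ (after Frobenius reciprocity), the boundary twists act as the scalars $\theta_i,\dots,\theta_l$, and taking determinants in the mapping-class-group relation $t_{ij}\,t_{ik}\,t_{jk}=t_i\,t_j\,t_k\,t_l$ of the four-punctured sphere (equivalently, Vafa's monodromy argument on $\overline{\mathcal{M}}_{0,4}$) is precisely the stated equation, with left-hand exponent $\dim V=\sum_n N_{ij}^{\ol{n}}N_{kl}^n$. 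Crucially, this uses the ribbon/braided structure (or conformal blocks), which is strictly more data than the matrices $S$ and $T$; the theorem is a necessary condition extracted from that richer setting, which is exactly how the paper deploys it as a filter before checking $(ST)^3=\lambda S^2$.

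A secondary but real error is your homogeneity claim for $\theta_1=1$. Under $T\mapsto\mu T$ the left-hand side of the four-index equation scales by $\mu^{4D}$ while the right-hand side scales by $\mu^{3D}$, where $D=\sum_n N_{ij}^{\ol{n}}N_{kl}^n$, since by the pivotal relations \eqref{eq:pivotal} (and commutativity, automatic in the presence of an $S$-matrix) all three exponent sums on the right equal $D$. The system is therefore \emph{not} scale-invariant --- indeed setting $i=j=k=l=1$ in the family gives $\theta_1^4=\theta_1^3$, so $\theta_1=1$ is forced by the other equations rather than being a harmless choice of scale. Since the paper's definition of a $T$-matrix (``there exists $\lambda$ with $(ST)^3=\lambda S^2$'') is manifestly scale-covariant, the theorem implicitly assumes the normalized twist of Vafa's setting, $\theta_i=e^{2\pi i h_i}$ with $h_1=0$; any honest proof should state this normalization as a hypothesis, not derive it from the modular relation alone.
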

To find admissible $T$-matrices one can first solve the Vafa equations and then check whether the modularity constraint (\ref{eq:modularrepresentation}) holds. To solve the Vafa equations one can take a logarithm of both sides of the equations to obtain
\begin{equation}
  \sum\limits_{n =1}^r N_{ij}^{\ol{n}}N_{kl}^n t_{i} + N_{ij}^{\ol{n}}N_{kl}^n t_{j} + N_{ij}^{\ol{n}}N_{kl}^n t_k + N_{ij}^{\ol{n}}N_{kl}^n t_{l} - \left(N_{ij}^n N_{ln}^{\ol{k}} + N_{jk}^nN_{ln}^{\ol{i}} + N_{ik}^nN_{ln}^{\ol{j}}\right) t_{n} \in \mathbb{Z}
\end{equation}
where $t_{i} := \ln(\theta_{i})/(2\pi i)$. This is a system of linear equations with integer coefficients which can, e.g., be solved using a Smith decomposition.

From the collection of commutative fusion rings we obtained, $54$ have matching $S$-and $T$-matrices.
\section{Categorifiability}\label{s:categorifiability}
Even if a fusion ring does not have an associated MTC or UMTC one can still attempt to categorify the ring to a fusion category (FC) or unitary fusion category (UFC), which are--from a mathematical viewpoint--interesting structures in their own right. Just like the case for MTCs and UMTCs there exist useful criteria to determine whether a fusion ring admits a corresponding FC or UFC. To rule out general categorifiability for fusion rings the zero-spectrum criterion (ZSC) from \cite{Liu2020} (and also \cite{Liu2022triangular}) can be used:
\begin{criterion}\label{crit:ZSC}
  For a fusion ring $R$, if there are indices $i_{j}, 1 \leq j \leq 9,$ such that $N_{i_{4}i_{1}}^{i_{6}}$, $N_{i_{5} i_{4}}^{i_{2}}$, $N_{i_{5} i_{6}}^{i_{3}}$, $N_{i_{7} i_{9}}^{i_{1}}$ , $N_{i_{2} i_{7}}^{i_{8}}$, $N_{i_{8} i_{9}}^{i_{3}}$ are non-zero, and
  \[
  \begin{aligned}
    \sum_{k=1}^r N_{i_{4} i_{7}}^{k} N_{\ol{i_{5}} i_{8}}^{k} N_{i_{6} \ol{i_{9}}}^{k} &=0;  \text{ and}\\
    N_{i_{2} i_{1}}^{i_{3}} &=1; \text{ and}\\
    \sum_{k = 1}^r N_{i_{5} i_{4}}^{k} N_{i_{3} \ol{i_{1}}}^{k} &=1 \text { or } \sum_{k=1}^r N_{i_{2} \ol{i_{4}}}^{k} N_{i_{3} \ol{i_{6}}}^{k}=1 \text { or } \sum_{k} N_{\ol{i_{5}} i_{2}}^{k} N_{i_{6} \ol{i_{1}}}^{k}=1;  \text{ and}\\
    \sum_{k = 1}^r N_{i_{2} i_{7}}^{k} N_{i_{3} \ol{i_{9}}}^{k} &=1 \text { or } \sum_{k=1} N_{i_{8} \ol{i_{7}}}^{k} N_{i_{3} \ol{i_{1}}}^{k}=1 \text { or } \sum_{k=1}^r N_{\ol{i_{2}} i_{8}}^{k} N_{i_{1} \ol{i_{9}}}^{k}=1
  \end{aligned}
  \]
then $R$ cannot be categorified.
\end{criterion}
By applying the ZSC to the fusion rings obtained we found that $267$ of the $28451$ fusion rings can not be categorified at all. To rule out unitary categorifiablility the commutative Schur product criterion (CSPC) \cite{Liu2021fusion} can be used:
\begin{criterion}\label{crit:CSPC}
    Let $R$ be a commutative fusion ring, let $\left[N_{i}\right]$ be its fusion matrices, and let $\left[\chi_{i j}\right]$ be its character table, with $\chi_{1, i}= d_{i}$, the Frobenius-Perron eigenvalue of $[N_i]$. If $\exists\left(j_{1}, j_{2}, j_{3}\right)$ such that
  \begin{equation}
    \sum_{ i = 1 }^r \frac{ \chi_{j_{1} i } \chi_{ j_{2} i } \chi_{ j_{3} i}}{ \chi_{1 j} } < 0
  \end{equation}
  then $R$ admits no unitary categorification.
\end{criterion}
Applying this criterion to the commutative fusion rings we found it turns out that out of the $28333$ commutative fusion rings, $19471$ admit no categorification to a UFC.

\section{Some Comments on Non-commutative Fusion Rings}\label{s:noncomrings}
Out of the $28451$ fusion rings we have found, $118$ are non-commutative. Apart from $4$ exeptions (see \ref{ss:noncomfusringswithoutsubgrp}), all non-commutative fusion rings contain a non-trivial subgroup.

\subsection{Song extensions of groups}
Fusion rings that contain a subgroup are called generalized near-group fusion rings \cite{thornton2012generalized}.
The most notable of such fusion rings are the Tambara-Yamagami fusion rings \cite{TAMBARA1998692} and the Haagerup-Izumi fusion rings of groups. The structure of both of these rings can be generalized as follows.

\begin{definition}
  Let $G$ be a finite group, $T$ a finite set, and $ \sigma_l:G \times T \rightarrow T: (g,t)\mapsto\sigma_l(g,t) =:g \cdot t $ a left action of $G$ on $T$ such that $T = G \cdot t_1$ for some $t_1 \in T$ and the left stabilizer of $t_1,\ G_{t_1}^l,$ obeys $G_{t_1}^l = H \trianglelefteq  G$. Let $\tilde{g} \in G$, $n \in \mathbb{N} $ and
  \begin{itemize}
    \item $ A: G/H \rightarrow G/H $ be an automorphism such that
      \begin{IEEEeqnarray}{rCl}\label{eq:autprop}
        A^2 ([g]) &=& [\tilde{g}^{-1} g \tilde{g}], \ \forall g \in G,\text{ and} \\
        A([\tilde{g}]) &=& [\tilde{g}],
      \end{IEEEeqnarray}
      where $ [\cdot] $ denotes the canonical projection from $G$ to $G/H$,
    \item $ \Phi: T \rightarrow G/H$ be such that $ \Phi(g \cdot t_1) = [g]$, i.e. $\Phi$ maps a $t \in T$ to the class in $G/H$ that maps $t_1$ to $t$, and
    \item $ \lambda: G/H \rightarrow G $ be a lift of the elements of $G/H$ into $G$, i.e.\ $[\lambda(gH)] = g, \forall g \in G$.
  \end{itemize}
  The set $G \sqcup  T$ with the following product
    \begin{IEEEeqnarray}{rCl}
    g \times g' &=& g g' \\
    g \times t  &=& g \cdot t \\
    t \times g  &=& \lambda( \Phi(t) A(g) )  \cdot t_1  \\
    t \times t' &=& \lambda( \Phi(t) A(\Phi(t')) ) \tilde{g}^{-1} \sum_{h \in H}h + n \sum_{t \in T}t,
    \end{IEEEeqnarray}
    $\forall g,g' \in G$, $\forall t,t' \in T$, is called the $n$'th \textbf{single orbit normal group} (or \textbf{song}) extension of $G$ featuring $H$, $A$, and $\tilde{g}$ and we denoted it by $ \left[ H \trianglelefteq  G \right]^{A}_{\tilde{g}|n}$.
\end{definition}

The proof that songs are well defined and the rings they produce are fusion rings can be found in appendix \ref{ap:Proof}.
We have the following
\begin{examples}
  \begin{enumerate}
  	\item Let $ T = \{t\} $, $\tilde{g} = 1$, and $n \in \mathbb{N} $. Then $H = G$, $ A $ is trivial, and the fusion rules become
  \begin{IEEEeqnarray}{rCl}
    g_i \times g_j &=& g_i g_j \\
    g_i \times t   &=& t \times g_i = t \\
    t \times t     &=& \sum_{g \in G}g + n t.
  \end{IEEEeqnarray}
  This ring is called a near-group fusion ring, and in particular when $n = 0$ this ring is called the Tambara-Yamagami fusion ring of the group $G$: $\text{TY}(G)$.
  Such a ring is non-commutative iff the the group $G$ is non-commutative.
  Corollary \ref{cor:1particleextension} tells us that these songs capture all extensions of any group by one particle.
\item Let $G = \{g_1,\ldots,g_n\}$ be a commutative group, $T = \{t_1, g_2 \cdot t_1,\ldots,g_n\cdot t_1\} $ (and thus $ H = \{1\}$), $\tilde{g} = 1$, and $ A : g \mapsto g^{-1} $ then the fusion rules become
  \begin{IEEEeqnarray}{rCl}
    g_i \times g_j &=& g_ig_j \\
    g_i \times t_j &=& (g_ig_j)t_1 \\
    t_i \times g_j &=& (g_ig_j^{-1})t_1 \\
    t_i \times t_j &=& g_i g_j^{-1} g_0^{-1} + n \sum_{t \in T } t.
  \end{IEEEeqnarray}
  If $ n = 1 $ this ring is called the Haagerup-Izumi fusion ring of the commutative group $G$: $\text{HI}(G)$.
  Such a ring is non-commutative iff the group $G$ (seen as a group fusion ring) contains non-selfdual particles.
\item Let $G = D_3$ be the dihedral group with $6$ elements, $T = \{t_1, t_2\} $, $H = \mathbb{Z}_3$, $\tilde{g} = 1$, and $A$ trivial. The song $\left[ \mathbb{Z}_3 \trianglelefteq D_3\right]_{1|0}^{\text{Id}}$ (entry $155$ in the file ``FusionRingMultiplicationTables'') is a non-commutative categorifiable fusion ring that is not of the type $\text{TY}(G)$ or $\text{HI}(G)$ for any group $G$. This ring has categorifications since it is the Grothendieck ring of the crossed product category $\mathcal{C}_{\mathrm{TY}(\mathbb{Z}_3)}\rtimes \mathbb{Z}_2$. \cite{Etingof2015}
\item Let $\alpha:\mathbb{Z}_3 \rightarrow \mathbb{Z}_3:g\mapsto g^{-1}$. The song $\left[ \mathbb{Z}_2 \trianglelefteq \mathbb{Z}_6 \right]_{1|0}^{\alpha}$ (entry $305$ in the file ``FusionRingMultiplicationTables'') also has categorifications but is neither of the types $\text{TY}(G)$ or $\text{HI}(G)$ for any group $G$, nor is it the Grothendieck ring of a crossed product category. The files with the data of the categorifications of this and previous example are included as ancillary files ``CategorificationsFR\_8\_1\_2\_2.wdx'' and ``CategorificationsFR\_9\_1\_4\_3.wdx'' respectively.
  \end{enumerate}
\end{examples}

\subsection{Generic Non-commutative Fusion Rings}
Not all non-commutative fusion rings are of this type, however. Table \ref{tab:detailednoncompergroup} gives an overview of the multiplicity-free fusion rings per subgroup and rank.
\begin{table}
  \[
\begin{array}{|c|c|c|c|c|}
  \hline
  & 6 & 7 & 8 & 9 \\
  \hline
  \mathbb{Z}_2 &  &  & \begin{array}{ll} \text{FR}_{11}^{\text{8,1,2}}, & \text{FR}_{29}^{\text{8,1,2}}, \\ \text{FR}_{30}^{\text{8,1,2}} & \end{array} & \begin{array}{ll} \text{FR}_{32}^{\text{9,1,2}}, & \text{FR}_{38}^{\text{9,1,2}}, \\ \text{FR}_{41}^{\text{9,1,2}}, & \text{FR}_{44}^{\text{9,1,2}}, \\ \text{FR}_{20}^{\text{9,1,4}}, & \text{FR}_{31}^{\text{9,1,4}}, \\ \text{FR}_{33}^{\text{9,1,4}} & \end{array} \\
  \hline
  \mathbb{Z}_3 & \text{HI}(\mathbb{Z}_3) & \text{FR}_{15}^{\text{7,1,2}} &  & \text{FR}_{27}^{\text{9,1,4}} \\
  \hline
  \mathbb{Z}_4 &  &  &
   \begin{array}{ll}
      \text{FR}_8^{\text{8,1,2}}, & \text{HI}(\mathbb{Z}_4), \\%
      \text{FR}_8^{\text{8,1,4}}, & \text{FR}_7^{\text{8,1,6}}, \\ \relax%
      [I\trianglelefteq\mathbb{Z}_4]^{\alpha}_{\mathbf{2};1} &
   \end{array} &
   \begin{array}{ll}
      \text{FR}_{36}^{\text{9,1,2}}, & \text{FR}_9^{\text{9,1,4}}, \\ \text{FR}_{17}^{\text{9,1,4}}, & \text{FR}_{13}^{\text{9,1,6}}
   \end{array} \\
  \hline
  \mathbb{Z}_2\times \mathbb{Z}_2 &  &  & \begin{array}{ll}\text{FR}_6^{\text{8,1,2}}, & \text{FR}_9^{\text{8,1,2}}, \\ \text{FR}_{10}^{\text{8,1,2}}, & [I\trianglelefteq\mathbb{Z}_2×\mathbb{Z}_2]^{\alpha}_{\mathbf{1};1} \end{array} & \begin{array}{ll} \text{FR}_7^{\text{9,1,2}}, & \text{FR}_{22}^{\text{9,1,2}}, \\ \text{FR}_{37}^{\text{9,1,2}}, & \text{FR}_4^{\text{9,1,4}}, \\ \text{FR}_8^{\text{9,1,4}}, & \text{FR}_{22}^{\text{9,1,4}}\end{array} \\
  \hline
  \mathbb{Z}_6 &  &  &  & [ \mathbb{Z}_2 \trianglelefteq \mathbb{Z}_6]^{\alpha}_{\mathbf{1};0},  [ \mathbb{Z}_2 \trianglelefteq \mathbb{Z}_6]^{\alpha}_{\mathbf{1};1} \\
  \hline
  D_3 & D_3  &
  \begin{array}{l}
      \text{TY}(D_3), \\ \relax
      [D_3\trianglelefteq D_3]^{\mathrm{Id}}_{\mathbf{1};1}
  \end{array} & \begin{array}{ll}[\mathbb{Z}_3\trianglelefteq D_3]^{\mathrm{Id}}_{\mathbf{1};0}, & [\mathbb{Z}_3\trianglelefteq D_3]^{\mathrm{Id}}_{\mathbf{2};1}, \\ \relax [\mathbb{Z}_3\trianglelefteq D_3]^{\mathrm{Id}}_{\mathbf{1};1}, & [ \mathbb{Z}_3 \trianglelefteq D_3 ]^{\mathrm{Id}}_{\mathbf{2};0}, \\ \relax
  \text{FR}_3^{\text{8,1,2}}  & \end{array} & \text{FR}_4^{\text{9,1,2}},\text{FR}_6^{\text{9,1,4}} \\
  \hline
  D_4 &  &  & D_4  & \text{TY}(D_4),[ D_4 \trianglelefteq D_4 ]^{\mathrm{Id}}_{\mathbf{1};1} \\
  \hline
  Q &  &  & Q & \text{TY}(Q),[ Q \trianglelefteq Q ]^{\mathrm{Id}}_{\mathbf{1};1} \\
  \hline
\end{array}
\]

  \caption{Non-commutative multiplicity-free fusion rings per non-trivial largest subgroup (rows) and rank (columns). Here $I$ stands for the trivial group, HI$(G)$ for the Haagerup-Izumi fusion ring of $G$ and TY$(G)$ for the Tambara-Yamagami fusion ring of $G$. $\alpha$ denotes the, up to isomorphism, only non-trivial group automorphism, $\mathbf{1}$ the unit element, and $\mathbf{2}$ the, up to isomorphism, only non-trivial group element. For the meaning of the names $\text{FR}^{r,m,n}_i$ see appendix \ref{s:namingoffusionrings}}\label{tab:detailednoncompergroup}
\end{table}
To get some insight in the properties of generic non-commutative rings and understand some of the structure of table \ref{tab:detailednoncompergroup} it is interesting to take a look at the consequences of having a subgroup structure.


Let $G = \langle g_1 = 1, \ldots, g_n \rangle$ be a group fusion ring and consider extending it by adding elements from a finite set, say $t_1,\ldots,t_m \in T$.
Now consider a product, say $\times$, on $G \sqcup  T$ that coincides with the product on $G$, i.e.\ $g \times h = g h, \forall g,h \in G$.
Since $G$ is a group, and a fusion ring is associative and has a unique unit, the left and right action $\sigma^l_g : t \mapsto g \times t$ and $\sigma^r_g: t \mapsto t \times g$ of $G$ on $T$ must permute the elements of $T$.
Requiring that the ring $G_T := \langle g_1 = 1, \ldots, g_{n}, t_1, \ldots t_{m} \rangle$ with product $\times$ is a fusion ring puts further restrictions on $\times$:
\begin{proposition}\label{prop:structnonabrings}
   Let $G$ and $T$ be as above and write $G^l_t$ ($G^r_t$) for the left (right) stabilizer of $t$ under $G$, $[t]^l$ ($[t]^r$) for the left (right) orbit of $t$ under $G$ and $H^l_i$ ($H^r_i$) for the elements of $G/G^l_t$ ($G/G^r_t$) (with $H^l_1 = G^l_t$ and $H^r_1 = G^r_t$) then for all $t \in T, g \in G$
   \begin{enumerate}
      \item
         \begin{equation}
            G^l_t = G^r_{\ol{t}},
            \label{eq:gltisgrtdual}
         \end{equation}
      \item for all $a,b \in G \sqcup  T: N_{ab}^g \in \{0,1\}$ and
         \begin{eqnarray}
            N_{t\ol{t}}^g &=& 1 \Leftrightarrow g \in G^l_t, \text{ and}\\
            N_{\ol{t}t}^g &=& 1 \Leftrightarrow g \in G^r_t
            \label{eq:stabilizerinRHS}
         \end{eqnarray}
      \item every $\tau \in [t]^l$ can be labelled by a unique $H^l_i$ (say $\tau_{H^l_i} \equiv \tau_i$) in such a way that for all $g$ in $G$
         \begin{equation}
            N_{\tau_i \ol{t}}^g = 1 \Leftrightarrow g \in H^l_i ,
            \label{eq:cosetsinRHS}
         \end{equation}
      \item if $|T| < |G|$ then there exists a $t \in T$ with non-trivial $G^l_t$ such that $\left|[t]^l\right|$ divides $|G|$, \label{st:orbitstab1}
      \item if $|T| = |G|$ and there exists a $t \in T$ with trivial $G^l_t$, then $[t]^l = T$ and $G^l_t$ is trivial for all $t \in T$, and
      \item if $|T| > |G|$ then at least $\mathrm{mod}(|G|,|T|)$ elements of $T$ have a non-trivial left stabilizer.
   \end{enumerate}
\end{proposition}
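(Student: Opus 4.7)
The plan is to prove the six items in order, since each relies on earlier ones. The key tools are the pivotal relations (\ref{eq:pivotal}), the fact (established just before the proposition) that $\sigma^l_g$ and $\sigma^r_g$ are permutations of $T$, so that $g\times t$ and $t\times g$ are always single basis elements of $T$, and the standard orbit--stabilizer theorem applied to these actions.

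For (1) and (2), I would first observe that $\ol{g}=g^{-1}$ for $g\in G$, since $G$ is a group ring. The condition $g\in G^l_t$ is equivalent to $N_{gt}^t=1$ (all other coefficients being $0$ because $\sigma^l_g$ is a permutation), and the pivotal identity $N_{ab}^c=N_{\ol{b}\ol{a}}^{\ol{c}}$ gives $N_{gt}^t=N_{\ol{t}\ol{g}}^{\ol{t}}$, which equals $1$ iff $\ol{g}\in G^r_{\ol{t}}$; since stabilizers are subgroups (closed under inverses), this yields $G^l_t=G^r_{\ol{t}}$, i.e.\ (1). For the first assertion of (2), when at least one of $a,b$ lies in $G$ the product is a single basis element, so $N_{ab}^g\in\{0,1\}$; and when $a,b\in T$ the pivotal identity $N_{ab}^g=N_{\ol{g}a}^{\ol{b}}$ expresses $N_{ab}^g$ as a coefficient appearing in $\ol{g}\times a$, again $0$ or $1$. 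Specializing to $(a,b)=(t,\ol{t})$ and $(a,b)=(\ol{t},t)$ and invoking (1) produces the two stabilizer characterizations.

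For (3), the orbit--stabilizer theorem provides a bijection between cosets $H^l_i$ and orbit elements via $H^l_i\leftrightarrow \tau_i:=g_i\times t$ for any representative $g_i\in H^l_i$; the choice of representative does not matter because elements of $G^l_t$ fix $t$. Associativity and (2) then give
\[
\tau_i\times\ol{t} \;=\; g_i\times(t\times\ol{t}) \;=\; g_i\times\Bigl(\sum_{h\in G^l_t}h + (\text{terms in }T)\Bigr) \;=\; \sum_{h\in g_i G^l_t} h + (\text{terms in }T),
\]
so the $G$-component of $\tau_i\times\ol{t}$ is exactly $\sum_{g\in H^l_i} g$, proving (3).

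Items (4)--(6) follow from orbit--stabilizer applied to the decomposition of $T$ into left orbits: each orbit has size $|G|/|G^l_t|$, a divisor of $|G|$, and orbits with trivial stabilizer have size exactly $|G|$. For (4), if $|T|<|G|$ and $T$ is non-empty then every orbit has size $<|G|$, hence non-trivial stabilizer. For (5), any $t$ with trivial stabilizer has $|[t]^l|=|G|=|T|$, forcing $[t]^l=T$; since stabilizers within a single orbit are conjugate, all are trivial. For (6), at most $\lfloor|T|/|G|\rfloor$ orbits can have size $|G|$, so the remaining $|T|-\lfloor|T|/|G|\rfloor\cdot|G|=|T|\bmod|G|$ elements lie in orbits with non-trivial stabilizer. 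The main obstacle is (3): the careful bookkeeping required to show that the $G$-part of $\tau_i\times\ol{t}$ is supported on the \emph{entire} coset $H^l_i$, with multiplicity exactly one. Everything else reduces to routine counting once (2) is in hand.
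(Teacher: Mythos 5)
Your proof is correct and takes essentially the same route as the paper's: items (1)--(3) by duality manipulations (the paper argues (1)--(2) via associativity and uniqueness of duals, you via the equivalent pivotal relations, and your coset computation for (3), multiplying $t\times\ol{t}$ by a coset representative, is exactly the paper's), and items (4)--(6) by the orbit--stabilizer theorem, which the paper invokes without elaboration and you correctly spell out, including the sensible reading of $\mathrm{mod}(|G|,|T|)$ as $|T|\bmod|G|$.
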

\begin{proof} Let $e$ denote the unit of $G$.
   \begin{enumerate}
      \item For all $h$ in $G^l_t:$ $e \in \ol{t} \times t = \ol{t} \times (h \times t) = (\ol{t} \times h) \times t$ but every element of a fusion ring has a unique dual so $\ol{t} \times h = \ol{t}$ for all $h$ in $G^l_t$.
      \item Since $e \in t \times \ol{t}$, $g \in g \times t \times \ol{t}$ so $g \in G^l_t \Rightarrow g \in t \times \ol{t}$. Now assume that $g \notin G^l_t$ and $g \in t \times \ol{t}$. Then $e \in g^{-1} \times t \times \ol{t}$ so $ g^{-1} \times t = \ol{\ol{t}} = t$ which is impossible by assumption. Similar reasoning can be applied to prove the second formula.
      \item For all  $h_i$ in $H^l_i$: $\tau_i \times \ol{t} = h_i \times t \times \ol{t} \ni h_i$
   \end{enumerate}
The last three statements follow directly from the orbit stabilizer theorem.
\end{proof}

The following corollary then classifies all $1$-particle extensions of groups to fusion rings.
\begin{corollary}\label{cor:1particleextension}
   Let $G$ and $T$ be as above. If $|T| = 1$ then for each $k \in \mathbb{N} $ there exists only one extension $R$ of $G$ by $T$. Moreover $R$ is commutative if and only if $G$ is commutative and for $t \in T$ and all $g \in G$ (seen as elements of $R$)
     \begin{IEEEeqnarray}{C}
        g \times t = t \times g = t \\
        t \times t = \sum_{g\in G}g + k\ t
     \end{IEEEeqnarray}
\end{corollary}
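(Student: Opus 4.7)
The plan is to combine Proposition \ref{prop:structnonabrings} with the unital and duality axioms to pin down every fusion coefficient, and then note that the only remaining free parameter is $N_{tt}^t$, which may be taken to be any $k \in \mathbb{N}$.

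First I would fix all coefficients of the form $N_{g?}^{?}$ with $g \in G$. By the discussion preceding Proposition \ref{prop:structnonabrings}, the left and right actions $\sigma_g^l$ and $\sigma_g^r$ permute $T$; since $|T|=1$ there is only one permutation, forcing $g \times t = t \times g = t$ for every $g \in G$. Combined with $g \times h = gh$ on $G$, this handles every product in which at least one factor lies in $G$.

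Next I would identify $t \times t$. Because the dual $\ol{\cdot}$ restricts to an involution of $T$ and $|T|=1$, one has $\ol{t}=t$. The stabilizer $G_t^l$ equals $G$ by the previous paragraph, so part 2 of Proposition \ref{prop:structnonabrings} gives $N_{t\ol{t}}^g = N_{tt}^g = 1$ for every $g \in G$, and no non-group coefficient $N_{tt}^g$ can be larger than $1$. The only unconstrained coefficient left is $k := N_{tt}^t \in \mathbb{N}$, yielding
\[
   t \times t \;=\; \sum_{g \in G} g + k\, t .
\]
Existence for each $k$ was already supplied by Example 1 (the $k$th song extension $[G \trianglelefteq G]^{\mathrm{Id}}_{1|k}$), so uniqueness is all that remains.

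To close the proof I would verify that these rules really do satisfy associativity and Frobenius reciprocity. Frobenius reciprocity is immediate: $N_{tt}^1 = 1 = \delta_{t\ol{t}}$ and the formula $N_{ab}^c = N_{\ol{a}c}^b$ reduces, in every nontrivial case, to the symmetries $g t = t = t g$ and $\ol{t}=t$. Associativity is a short case analysis: when at most two $t$'s appear the verification collapses to the group law, and the only nontrivial check, $(t \times t) \times t = t \times (t \times t)$, gives $k\sum_{g \in G} g + (|G| + k^2)\,t$ on both sides because the action of $G$ on $t$ is trivial. Finally, commutativity of the product on $G \sqcup T$ reduces to commutativity of $G$, since $g \times t = t \times g$ automatically and $t \times t$ commutes with itself. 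I do not expect a genuine obstacle here; the only delicate point is recognising that Proposition \ref{prop:structnonabrings} already determines $N_{tt}^g$ for $g \in G$ completely, after which the rest is bookkeeping.
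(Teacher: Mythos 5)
Your proposal is correct and follows essentially the same route as the paper: both deduce $\ol{t}=t$ and $G^l_t = G$ from $|T|=1$, invoke part 2 of Proposition \ref{prop:structnonabrings} (equation (\ref{eq:stabilizerinRHS})) to force $N_{tt}^g = 1$ for all $g \in G$, and observe that the sole remaining freedom $N_{tt}^t = k$ is consistent by checking associativity. You merely spell out the associativity and Frobenius-reciprocity verifications that the paper leaves implicit, which is fine.
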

\begin{proof}
   Since $G$ stabilizes $t$ and $t$ is self-dual it follows from (\ref{eq:stabilizerinRHS}) that $N_{tt}^g = N_{t\ol{t}}^g = 1$ for all $g$ in $G$. The only remaining degree of freedom is the value of $N_{tt}^t \in \mathbb{N} $ which can be chosen freely as is verified by checking the associativity constraints.
\end{proof}
For $|T| = 2$ commutativity of an extension also depends solely on the commutativity of the group and all extensions can be classified as well.
\begin{proposition}\label{prop:2particleextension}
  If $R$ is an extension of a non-trivial group $G = \langle g_1,\ldots,g_{n}\rangle$ by $T = \{ t_1, t_{2} \}$ then we have for all $a,b,c,d \in \mathbb{N}$ that
  \begin{enumerate}
    \item $R$ is a commutative fusion ring iff $G$ is a commutative group,
    \item if $G = G^l_{t} \ \forall t \in T$ and $t_{i} = \ol{t_{i}}$
    \begin{IEEEeqnarray}{rCl}
      t_1 \times t_1 &=& \sum_{g\in G}g + a t_1 + b t_{2} \\
      t_{2} \times t_{2} &=& \sum_{g\in G}g + c t_1 + d t_{2} \\
      t_1 \times t_{2} &=& t_{2} \times t_1 = b t_1 + c t_{2},
    \end{IEEEeqnarray}
    where: $ b(b-d) + c(c-a) = |G|$,
    \item if $G = G^l_{t} \ \forall t \in T$ and $t_{i} \neq \ol{t_{i}}$
    \begin{IEEEeqnarray}{rCl}
      t_1 \times t_1 &=& a t_1 + b t_{2} \\
      t_{2} \times t_{2} &=& b t_1 + a t_{2} \\
      t_1 \times t_{2} &=& t_{2} \times t_1 = \sum_{g \in G}g + a t_1 + a t_{2},
    \end{IEEEeqnarray}
    where: $b^2 - a^2 = |G|$,
    \item if $G \neq G^l_{t} \ \forall t \in T$ and $t_{i} = \ol{t_{i}}$
    \begin{IEEEeqnarray}{rCl}
      t_1 \times t_1 &=& t_{2} \times t_{2} = \sum_{g \in G^l_t}g + a t_1 + b t_{2} \\
      t_1 \times t_{2} &=& t_{2} \times t_1 = \sum_{g \notin  G^l_t}g + b t_1 + a t_{2}
    \end{IEEEeqnarray}
    \item if $G \neq G^l_{t} \ \forall t \in T$ and $t_{i} \neq \ol{t_{i}}$
    \begin{IEEEeqnarray}{rCl}
      t_1 \times t_1 &=& t_{2} \times t_{2} = \sum_{g \notin  G^l_t}g + a t_1 + a t_{2} \\
      t_1 \times t_{2} &=& t_{2} \times t_1 = \sum_{g \in G^l_t}g + a t_1 + a t_{2}
    \end{IEEEeqnarray}
  \item if $|G| > 2$ there are no multiplicity-free extensions $R$ of $G$ by $T$ where $G$ stabilizes every element of $T$,\label{prop:2particleextension6}
    \item if $R$ is multiplicity free then $|G|$ is even.
  \end{enumerate}
\end{proposition}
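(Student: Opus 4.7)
The plan is to leverage Proposition~\ref{prop:structnonabrings} to set up a case analysis and then pin down the free coefficients via the pivotal relations and the $G$-action on $T$. First, since $|T|=2$, every left $G$-orbit on $T$ has size $1$ or $2$, so the stabilizer $G^l_t$ is either all of $G$ or an index-$2$ (hence normal) subgroup; moreover the two stabilizers coincide, since stabilizers of points in the same orbit are conjugate and any two index-$2$ subgroups of $G$ agree. Combined with the self-duality dichotomy $t_i=\ol{t_i}$ versus $\ol{t_1}=t_2$, this yields precisely the four cases 2--5.

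In each case, I would first determine the coefficients of group elements in each product $t_i\times t_j$. Proposition~\ref{prop:structnonabrings}(2) says the coefficient of $g$ in $t_i\times \ol{t_i}$ is $1$ exactly for $g\in G^l_{t_i}$; transporting this via the pivotal relations~(\ref{eq:pivotal}) fixes the summands $\sum_{g\in H}g$ or $\sum_{g\notin H}g$ in each product as advertised. For the coefficients on $t_1$ and $t_2$, I would apply the pivotal relations to reduce to a small number of free parameters, and then, in the transitive cases 4 and 5, use the identity $t_2\times t_2 = (g_0\times t_1)\times t_2 = g_0\times(t_1\times t_2)$ for any $g_0\in G\setminus H$ (using $g_0\cdot t_1=t_2$ and $g_0\cdot t_2=t_1$, the latter from $g_0^2\in H$) to force pairs of coefficients to coincide; a variant, $t_2\times t_1 = g_0\times(t_1\times t_1)$, similarly forces the equality $t_1\times t_2 = t_2\times t_1$ in those cases. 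The quadratic constraints in cases 2 and 3 I would then obtain by expanding both sides of $(t_1\times t_1)\times t_2 = t_1\times(t_1\times t_2)$ and comparing the coefficients of $t_2$ (respectively $t_1$): this gives $b(b-d)+c(c-a)=|G|$ in case 2 and $b^2-a^2=|G|$ in case 3.

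For statement~1, the forward direction is immediate since $G$ sits inside $R$ as a subring. For the converse, the explicit rules derived above already satisfy $t_i\times t_j = t_j\times t_i$ in every case, and $g\times t = t\times g$ reduces to agreement of the left and right stabilizer actions on $T$; by~(\ref{eq:gltisgrtdual}) these stabilizers coincide as subgroups of $G$, and the orbit actions themselves coincide precisely when $G$ is abelian (the discrepancy between left and right action factoring through conjugation in $G$). For statements~6 and~7, I would plug $a,b,c,d\in\{0,1\}$ into the quadratic constraints: since $|b(b-d)+c(c-a)|\leq 2$ and $|b^2-a^2|\leq 1$, case 2 forces $|G|\leq 2$ and case 3 forces $|G|\leq 1$, so for non-trivial $G$ with $|G|>2$ neither case~2 nor case~3 admits a multiplicity-free solution, establishing statement~6. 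For statement~7, cases~4 and~5 give $|G|=2|H|$ even automatically; case~3 is impossible for non-trivial $G$; and the only surviving case~2 option is $|G|=2$, also even.

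The main obstacle I anticipate is the bookkeeping in the second paragraph: tracking the pivotal relations and the orbit-swap symmetry simultaneously through all four sub-cases to pin down exactly how the $t_1$-- and $t_2$--coefficients match across the four products $t_i\times t_j$, and choosing the associativity identity that yields a non-trivial quadratic relation rather than a tautology (most of the natural identities collapse to $0=0$, as one can verify by directly testing $(t_1 t_1) t_1$ against $t_1 (t_1 t_1)$).
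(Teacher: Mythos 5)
Your overall route is the paper's route: Proposition~\ref{prop:structnonabrings} together with the pivotal relations~(\ref{eq:pivotal}) to fix the group-element parts of $t_i\times t_j$, associativity to extract the quadratic constraints, and orbit--stabilizer for parts~6 and~7. The paper dispatches parts~2--5 in one sentence (``apply (\ref{eq:stabilizerinRHS}), (\ref{eq:cosetsinRHS}), the pivotal relations and associativity''), and your concrete choices fill that in correctly: in case~2 the $t_2$-coefficient of $(t_1\times t_1)\times t_2=t_1\times(t_1\times t_2)$ gives $|G|+ac+bd=b^2+c^2$, i.e.\ $b(b-d)+c(c-a)=|G|$; in case~3 the $t_1$-coefficient gives $a^2+b^2=|G|+2a^2$; and the identity $t_2\times t_2=g_0\times(t_1\times t_2)$ with $g_0\notin H$, $g_0^2\in H$ does force the coefficient matching in cases~4 and~5. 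Parts~6 and~7 are verbatim the paper's argument.

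There is, however, a genuine flaw in your converse of part~1. The claim that the left and right actions on $T$ ``coincide precisely when $G$ is abelian,'' with their discrepancy ``factoring through conjugation,'' is unjustified and false as a general principle in both directions: for larger $T$ the discrepancy need not be inner even for abelian $G$ (in the Haagerup--Izumi rings one has $g_j\times t_i=(g_jg_i)\cdot t_1$ but $t_i\times g_j=(g_ig_j^{-1})\cdot t_1$, a twist by inversion, which is not conjugation), while for $|T|=2$ the two actions coincide \emph{unconditionally}, abelian or not. The paper's actual argument is: prove $G^l_t=G^r_t$ for every $t$ --- for $t=\ol{t}$ this is~(\ref{eq:gltisgrtdual}); for $\ol{t_1}=t_2$, if $g\in G^l_{t_1}$ but $g\notin G^r_{t_1}$ then $t_1\times t_1=(t_1\times g)\times t_1=t_2\times t_1$, and comparing the coefficient of the unit forces $t_1=\ol{t_1}$, a contradiction --- and then note that a $G$-action on a two-point set is a homomorphism into $S_2$, hence determined by its kernel (the common stabilizer), so equal stabilizers give equal actions and $t\times r=r\times t$ for \emph{all} $r\in R$, whatever $G$ is; commutativity of $R$ then reduces exactly to commutativity of $G$. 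Your own stabilizer observation nearly yields this (for $\ol{t_1}=t_2$, (\ref{eq:gltisgrtdual}) gives $G^l_{t_1}=G^r_{t_2}$, and $G^r_{t_1}=G^r_{t_2}$ is again automatic on a two-point set), so the repair is local, but as written the step rests on a false premise. A second, minor slip: ``any two index-$2$ subgroups of $G$ agree'' is false (the Klein four-group has three distinct ones); what you need --- and what your parenthetical ``(hence normal)'' already supplies --- is that stabilizers of points in one orbit are conjugate and an index-$2$ subgroup is normal, hence equal to its conjugates.
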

\begin{proof}
  \begin{itemize}
    \item[(1)] By setting $a = \ol{c}$ in the pivotal relation $N_{ab}^c = N_{b\ol{c}}^{\ol{a}}$ we obtain $N_{ab}^{\ol{a}} = N_{ba}^{\ol{a}}$. For both the case where $t_1 = \ol{t_1}$ and the case where $t_1 = \ol{t_2}$ this results in the relations $N_{t_1 t_2}^{t_1} = N_{t_2 t_1}^{t_1}$ and $N_{t_1 t_2}^{t_2} = N_{t_2 t_1}^{t_2}$. Therefore  $t_{i} \times t_{j} = t_{j} \times t_{i},\ i = 1,2$. We also have that $G_t^l = G_t^r, \ \forall t \in T$ (and therefore $t_{i} \times r = r \times t_{i} \forall r \in R$). Indeed: for the case where $t_{i} = \ol{t_{i}}$ this follows from equation (\ref{eq:gltisgrtdual}). If $t_{i} \neq \ol{t_{i}}$ and there would exist a $t \in T,g \in G$ with $g \in G^l_t$ but $g \notin  G^r_t$ then $t_1 \times t_1 = t_1 \times g \times t_1 = t_{2} \times t_1 \Rightarrow t_1 = \ol{t_1}$, a contradiction.
    \item[(2)$\ldots$(5)] Follows directly from applying equations (\ref{eq:stabilizerinRHS}) and (\ref{eq:cosetsinRHS}) in combination with the pivotal relations (\ref{eq:pivotal}), and the associativity constraints.
    \item[(6)] If $G$ stabilizes every element of $T$ then either $|G| = b(b-d) + c(c-a) \leq 2$ or $|G| = b^2 - a^2 \leq 1$.
    \item[(7)] If $G$ stabilizes every element of $T$ then, since $G$ is non-trivial, $|G| = 2$. If $G$ does not stabilize every element of $T$ then for any $t\in T$, $|[t]| = |T|$ and the result follows directly from statement nr. \ref{st:orbitstab1}.
  \end{itemize}
\end{proof}
A lot can be said about fusion rings with subgroups by looking at the cardinality of the stabilizers. Indeed, in proposition \ref{prop:2particleextension} we found several relations between the size of stabilizer subgroups (in that case the whole group) and the fusion coefficients. These are useful tools for finding obstructions to extend groups into fusion rings.
The following lemma and its corollaries will prove useful to exclude extensions of groups to fusion rings for more general $T$.
\begin{lemma}
Let $\tau \in T$, then for all $a \in R$
\begin{equation}
   \left| G^l_\tau \cap G^l_a \right| = \sum_{g \in G} \left( N_{\tau a}^g \right)^2 + \sum_{t \in T}\left( N_{\tau a}^t \right)^2 - \sum_{t \in T} N_{\tau t}^\tau N_{tc}^c
\label{eq:grouporderconstraint}
\end{equation}
\end{lemma}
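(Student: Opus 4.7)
The plan is to evaluate the coefficient of the unit $1\in R$ in the four-fold product $\tau\times a\times\overline{a}\times\overline{\tau}$ in two associativity-equivalent ways, and to interpret each result using Proposition~\ref{prop:structnonabrings} together with the pivotal relations~(\ref{eq:pivotal}).

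Grouping as $(\tau a)(\overline{a}\overline{\tau})$: for two elements $u,v\in R$, the coefficient of $1$ in $u v$ equals $\sum_{c\in B} u_c\,v_{\overline{c}}$, because $N_{cd}^1=\delta_{c,\overline{d}}$. With $u=\tau a$ and $v=\overline{a}\overline{\tau}$, the pivotal identity $N_{\overline{a}\overline{\tau}}^{\overline{c}}=N_{\tau a}^c$ collapses this to $\sum_{c}(N_{\tau a}^c)^2$, which splits as $\sum_{g\in G}(N_{\tau a}^g)^2+\sum_{t\in T}(N_{\tau a}^t)^2$, giving the first two terms of the right-hand side.

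Grouping as $\tau(a\overline{a})\overline{\tau}$: expand $a\overline{a}=\sum_{c}N_{a\overline{a}}^c\,c$. For each $c$, the coefficient of $1$ in $\tau c\overline{\tau}$ is $\sum_{d}N_{\tau c}^d\,N_{d\overline{\tau}}^1=N_{\tau c}^\tau$, since only $d=\tau$ contributes. So the whole coefficient also equals $\sum_{c}N_{a\overline{a}}^c\,N_{\tau c}^\tau$. For $c=g\in G$, Proposition~\ref{prop:structnonabrings}(2) yields $N_{a\overline{a}}^g=1$ iff $g\in G^l_a$ and $N_{\tau g}^\tau=1$ iff $g\in G^r_\tau$; combined with the duality $G^l_\tau=G^r_{\overline{\tau}}$ of~(\ref{eq:gltisgrtdual}) (or equivalently by running the same argument on the dualized product $\overline{\tau}\,a\,\overline{a}\,\tau$, which produces left stabilizers on both factors from the outset), the $G$-sum equals $|G^l_\tau\cap G^l_a|$. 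For $c=t\in T$, the pivotal identity $N_{ab}^c=N_{c\overline{b}}^a$ applied with $a\leftarrow a$, $b\leftarrow\overline{a}$, $c\leftarrow t$ rewrites $N_{a\overline{a}}^t=N_{ta}^a$, so the $T$-contribution becomes $\sum_{t\in T}N_{\tau t}^\tau\,N_{ta}^a$, precisely the subtracted term in the statement.

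Equating the two evaluations of the coefficient of $1$ in $\tau a\overline{a}\overline{\tau}$ and rearranging yields the claimed identity. The main technical obstacle is the pivotal-relation bookkeeping: the associativity grouping and the choice of dualizations must be arranged so that the $G$-sum separates cleanly into a product of indicator functions on \emph{left} stabilizers of both $\tau$ and $a$ (rather than a mixture of left and right), and so that the leftover $T$-sum is massaged by the pivotal relations~(\ref{eq:pivotal}) into exactly $N_{\tau t}^\tau\,N_{ta}^a$, using $G^l_\tau=G^r_{\overline{\tau}}$ and, if needed, the relabelling $t\leftrightarrow\overline{t}$ to reconcile the two orientations.
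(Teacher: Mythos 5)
Your proposal is correct and takes essentially the same route as the paper: the paper's proof compares the coefficient of $a$ in $\ol{\tau}\times(\tau\times a)$ versus $(\ol{\tau}\times\tau)\times a$ and collapses it with the pivotal relations, which by Frobenius reciprocity is exactly your coefficient-of-$1$ evaluation of $\tau\times a\times\ol{a}\times\ol{\tau}$ under two bracketings (your identity $N_{\ol{a}\,\ol{\tau}}^{\ol{c}}=N_{\tau a}^{c}$ playing the role of the paper's $N_{\ol{\tau}d}^{a}=N_{\tau a}^{d}$, and your $T$-term $\sum_{t}N_{a\ol{a}}^{t}N_{\tau t}^{\tau}$ matching the paper's $\sum_{t}N_{\ol{\tau}\tau}^{t}N_{ta}^{a}$ via $N_{a\ol{a}}^{t}=N_{ta}^{a}$ and $N_{\tau t}^{\tau}=N_{\ol{\tau}\tau}^{t}$). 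The left/right stabilizer wrinkle you carefully flag — your second grouping naturally produces $\left|G^{l}_{a}\cap G^{r}_{\tau}\right|$ — is present in the paper's own proof as well, which silently writes the group part of $\ol{\tau}\times\tau$ as $\sum_{g\in G^{l}_{\tau}}g$ even though its Proposition yields $G^{r}_{\tau}=G^{l}_{\ol{\tau}}$ there, and likewise the statement's $N_{tc}^{c}$ is a typo for $N_{ta}^{a}$, as both your derivation and the paper's confirm.
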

\begin{proof}
   For any particle $\tau \in T$ and $a \in R$ we have that
   \begin{IEEEeqnarray}{C}
      \ol{\tau} \times ( \tau \times a ) = \sum_{e \in R} \sum_{d \in R }N_{\tau a}^d N_{\ol{\tau} d}^e e
   \end{IEEEeqnarray}
   while
   \begin{IEEEeqnarray}{rCl}
(\ol{\tau} \times \tau ) \times a &=& \sum_{g \in G^l_{\tau}} g \times a + \sum_{t \in T} N_{\ol{\tau} \tau}^t t \times a
\\
&=& \sum_{g \in G^l_{\tau} \cap  G^l_{a}} a + \sum_{g \in G^l_{\tau} \backslash  G^l_{a}} g \times a + \sum_{t \in T} N_{\ol{\tau} \tau}^t t \times a
\\
&=& \left| G^l_{\tau} \cap  G^l_{a} \right| a + \sum_{g \in G^l_{\tau} \backslash  G^l_{a}} g \times a + \sum_{e \in R} \sum_{t \in T} N_{\ol{\tau} \tau}^tN_{t a}^e e
   \end{IEEEeqnarray}
   By looking at the $a$'th component and using the pivotal relation $N_{ab}^c = N_{\ol{a}c}^b$ we find that
   \begin{equation}
   \left| G^l_{\tau} \cap  G^l_{a} \right| = \sum_{g \in G } \left( N_{\tau a}^g \right)^2 + \sum_{t \in T } \left( N_{\tau a}^t \right)^2 - \sum_{t \in T} N_{\tau t}^\tau N_{t a}^a
   \end{equation}
\end{proof}
This bound becomes stronger when one or multiple particles are fixed by the group $G$.
\begin{corollary}
   If $|T| \geq 2$ and $\exists\tau \in T$ for which $G^l_\tau = G$ then for any $a \in T$ with $a \neq \ol{\tau}$
   \begin{equation}
   \left| G^l_{a} \right| = \sum_{t \in T } \left( N_{\tau a}^t \right)^2 - \sum_{t \in T} N_{\tau t}^\tau N_{t a}^a \leq |T|m^2
   \end{equation}
   where $m$ is the multiplicity of the fusion ring.
   \label{cor:multiplicityconstraint}
\end{corollary}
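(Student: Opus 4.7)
The plan is to apply the preceding lemma directly to this $\tau$ and $a$, then argue that every term in the sum $\sum_{g\in G}(N_{\tau a}^g)^2$ vanishes under the extra hypothesis $a\neq\overline{\tau}$. The bound on $|G^l_a|$ will then follow by discarding the non-negative subtracted sum and trivially bounding the remaining squares by $m^2$.

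First I would note that since $G^l_\tau = G$, the left-hand side of the lemma reduces: $|G^l_\tau\cap G^l_a| = |G^l_a|$. Next, I would establish that $N_{\tau a}^g = 0$ for every $g \in G$. The quick way is Frobenius reciprocity: $N_{\tau a}^g = N_{\overline{\tau}\, g}^a$, so it is enough to show that $\overline{\tau}\times g$ contains no copy of $a$. Here I would invoke part (1) of Proposition \ref{prop:structnonabrings}, which gives $G^r_{\overline{\tau}} = G^l_\tau = G$. Therefore $\overline{\tau}\times g = \overline{\tau}$ for every $g\in G$, so $N_{\overline{\tau}\,g}^a = \delta_{a,\overline{\tau}} = 0$ because $a \neq \overline{\tau}$. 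Substituting $\sum_{g\in G}(N_{\tau a}^g)^2 = 0$ into the lemma yields
\[
|G^l_a| = \sum_{t\in T}(N_{\tau a}^t)^2 - \sum_{t\in T} N_{\tau t}^\tau N_{ta}^a,
\]
which is the claimed equality.

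For the inequality, $N_{\tau a}^t \leq m$ by definition of the multiplicity, so $\sum_{t\in T}(N_{\tau a}^t)^2 \leq |T|\,m^2$; the subtracted sum $\sum_{t\in T} N_{\tau t}^\tau N_{ta}^a$ is a sum of non-negative integers, hence non-negative, so dropping it only increases the right-hand side. The main (minor) obstacle is the vanishing claim $N_{\tau a}^g = 0$: it is easy to guess but depends on correctly transferring the left-stabilizer hypothesis on $\tau$ to a right-stabilizer statement on $\overline{\tau}$, which is precisely the content of Proposition \ref{prop:structnonabrings}(1).
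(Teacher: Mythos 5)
Your proof is correct and follows exactly the intended route: the paper states this corollary without an explicit proof, as a direct specialization of the preceding lemma with $G^l_\tau = G$, and your verification that $\sum_{g\in G}(N_{\tau a}^g)^2 = 0$ via Frobenius reciprocity together with Proposition \ref{prop:structnonabrings}(1) (giving $G^r_{\ol{\tau}} = G^l_\tau = G$, hence $\ol{\tau}\times g = \ol{\tau}$ and $N_{\ol{\tau}g}^a = \delta_{a\ol{\tau}} = 0$) is precisely the step the paper leaves implicit. The final bound by dropping the non-negative subtracted sum and estimating each $N_{\tau a}^t \leq m$ is likewise the intended argument.
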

We can now generalize proposition \ref{prop:2particleextension}(\ref{prop:2particleextension6}):
\begin{proposition}
   If  $2 \leq |T| < |G|$ and $G^l_t = G$ for all $t \in T$ then there exists no multiplicity-free extension $R$ of $G$ by $T$. In particular if $|G|$ is prime and $2 \leq |T| < |G|$ there exists no multiplicity free-extensions of $G$ by $T$.
   \label{prop:primeconstraint}
\end{proposition}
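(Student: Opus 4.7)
My plan is to apply Corollary \ref{cor:multiplicityconstraint} directly, exploiting that the multiplicity-free assumption ($m=1$) sharpens its bound to $|G^l_a|\leq |T|$. First I would pick an arbitrary $\tau \in T$; by hypothesis its left stabilizer is all of $G$, which is exactly what the corollary needs. Since $|T|\geq 2$ and $\bar\tau$ accounts for at most one element of $T$, I can select some $a\in T$ with $a\neq \bar\tau$. By hypothesis $G^l_a = G$ as well. Substituting $m=1$ into the corollary yields
\[
   |G| \;=\; |G^l_a| \;\leq\; |T|\cdot 1^2 \;=\; |T|,
\]
which contradicts the standing assumption $|T|<|G|$. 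Hence no such multiplicity-free extension can exist.

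For the ``in particular'' claim, the task reduces to verifying that the hypothesis $G^l_t=G$ for all $t\in T$ is automatic when $|G|$ is prime. Since any subgroup of a group of prime order is either trivial or the whole group, for each $t\in T$ either $G^l_t=G$ or $|G^l_t|=1$. In the latter case the orbit--stabilizer theorem forces the left orbit $[t]^l\subseteq T$ to have size $|G|$, so $|G|\leq |T|$, contradicting $|T|<|G|$. Therefore $G^l_t = G$ for every $t\in T$ and the first part of the proposition applies.

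There is no substantial obstacle: the heavy lifting was already done in establishing Corollary \ref{cor:multiplicityconstraint}. The only step that needs minor care is confirming the existence of an element $a\in T$ with $a\neq\bar\tau$, which is immediate from $|T|\geq 2$, and checking that one may legitimately take $a=\tau$ if the only non-$\bar\tau$ candidate happens to be $\tau$ itself (the corollary requires $a\neq\bar\tau$, not $a\neq\tau$), so this causes no issue.
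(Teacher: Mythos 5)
Your proof is correct and follows exactly the route the paper intends: the proposition is stated there without an explicit proof, as an immediate consequence of Corollary \ref{cor:multiplicityconstraint} with $m=1$ (giving $|G|=|G^l_a|\leq |T|$, contradicting $|T|<|G|$), which is precisely your first paragraph. Your reduction of the prime-order case via orbit--stabilizer (a trivial stabilizer would force $|G|\leq|T|$, so every $G^l_t=G$) matches the paper's framework (statement 4 of Proposition \ref{prop:structnonabrings}), and your observation that the corollary only requires $a\neq\ol{\tau}$, not $a\neq\tau$, is a correct and worthwhile check.
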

We are now in a position that several patterns in table \ref{tab:detailednoncompergroup} can easily be explained. From corollary \ref{cor:1particleextension} we conclude that the only groups with rank less than $9$ and non-commutative $1$-particle extensions are $\text{D}_3$, $\text{D}_4$ and $\text{Q}_8$. For each group there are $2$ such extensions. Corollary \ref{cor:1particleextension} also explains the absense of the $3$ commutative groups of order $8$ in the table.
Proposition \ref{prop:primeconstraint} implies that both  $\mathbb{Z}_5$ and $\mathbb{Z}_7$ have no multiplicity free non-commutative extensions with rank less than $10$ hence their absence in the table.

\subsection{Non-Commutative Fusion Rings Without Non-Trivial Subgroup}\label{ss:noncomfusringswithoutsubgrp}
Only $4$ non-commutative rings were found that contain no non-trivial subgroup. Three of these (entries $7183$, $8513$, and $10777$ in the file ``FusionRingMultiplicationTables'') are simple, i.e. contain no subring at all.
The sole non-commutative non-simple fusion ring without a proper subgroup has the following multiplication table
\[
\begin{array}{|llllll|}
\hline
\mathbf{1} & \mathbf{2} & \mathbf{3} & \mathbf{4} & \mathbf{5} & \mathbf{6} \\
\mathbf{2} & \mathbf{1}+\mathbf{2} & \mathbf{6} & \mathbf{4}+\mathbf{5} & \mathbf{4} & \mathbf{3}+\mathbf{6} \\
\mathbf{3} & \mathbf{5} & \mathbf{1}+\mathbf{3} & \mathbf{4}+\mathbf{6} & \mathbf{2}+\mathbf{5} & \mathbf{4} \\
\mathbf{4} & \mathbf{4}+\mathbf{6} & \mathbf{4}+\mathbf{5} & \mathbf{1}+\mathbf{2}+\mathbf{3}+2 \ \mathbf{4}+\mathbf{5}+\mathbf{6} & \mathbf{3}+\mathbf{4}+\mathbf{5}+\mathbf{6} & \mathbf{2}+\mathbf{4}+\mathbf{5}+\mathbf{6} \\
\mathbf{5} & \mathbf{3}+\mathbf{5} & \mathbf{4} & \mathbf{2}+\mathbf{4}+\mathbf{5}+\mathbf{6} & \mathbf{4}+\mathbf{6} & \mathbf{1}+\mathbf{3}+\mathbf{4} \\
\mathbf{6} & \mathbf{4} & \mathbf{2}+\mathbf{6} & \mathbf{3}+\mathbf{4}+\mathbf{5}+\mathbf{6} & \mathbf{1}+\mathbf{2}+\mathbf{4} & \mathbf{4}+\mathbf{5} \\
\hline
\end{array}
\]
The elements $\mathbf{2}$ and $\mathbf{3}$ generate Fibonacci subrings. Elements $\mathbf{5} = \mathbf{3} \times \mathbf{2}$ and $\mathbf{6} = \mathbf{2} \times \mathbf{3}$ can be regarded as couples of Fibonacci particles and $\mathbf{4} = \mathbf{2} \times \mathbf{3} \times \mathbf{2} = \mathbf{3} \times \mathbf{2} \times \mathbf{3}$ as a tripple of Fibonacci particles. Indeed, the above fusion ring is completely determined by the generators $\mathbf{1}, \mathbf{2}$ and $\mathbf{3}$ together with the relations $\mathbf{2}^2 = \mathbf{1} + \mathbf{2}, \mathbf{3}^2 = \mathbf{1} + \mathbf{3}$, and $\mathbf{2} \times \mathbf{3} \times \mathbf{2} = \mathbf{3} \times \mathbf{2} \times \mathbf{3}$. The structure of the ring thus corresponds to that of a Hecke algebra with generators $\mathbf{2}$ and $\mathbf{3}$.
\section{Tools for Working With Fusion Rings}\label{s:toolsforworkingwithrings}
\subsection{Anyonwiki}\label{ss:anyonwiki}
In order to provide an overview of all known fusion rings and their properties, a MediaWiki website called Anynonwiki was launched. The anyonwiki contains pages on many multiplicity-free fusion rings we found, together with some standard statistics such as multiplication tables, the quantum dimensions of both fusion rings and individual particles, and info on categorification. The goal of the Anyonwiki is to be a central repository of data on fusion rings and anyon models that is accessible to non-experts as well. The idea is to embrace input from many researchers in the field. We invite researchers in the field to gain write access to the various pages (at the moment, we are cautious not to allow anyone on the internet to edit pages to protect the quality and maintain security). Currently, the AnyonWiki is hosted at \url{http://www.thphys.nuim.ie/AnyonWiki/index.php/Main_Page}.

\subsection{FusionRings.wl}\label{ss:fusionrings.wl}
To get an overview of the properties of fusion rings a Wolfram Language package was constructed. Much care is taken to provide a user-friendly interface that allows one to easily experiment with fusion rings and their properties. Some of the capabilities of the package are listed below.
\begin{itemize}
   \item Create fusion rings via multiplication tables, groups, products of known fusion rings, or simply work with a fusion ring from the database that is included. Functions are included to create fusion rings of special type: $\mathbb{Z}_n$, $\text{PSU}(2)_k$, $\text{SU}(2)_k$, $\text{HI}(G)$, and $\text{TY}(G)$.
   \item Fusion rings can be checked for equivalence, decomposed as a product of other fusion rings, and have their elements named, sorted, and permuted.
   \item Compute various properties of both rings and elements such as sub-fusion rings, quantum dimensions, fusion ring automorphisms, modular data, character tables, etc.
   \item Perform arithmetic with the elements of any fusion ring, or multiple fusion rings at the same time.
\end{itemize}
Every function is documented, and the source code, together with a notebook containing various examples, can be found at \url{https://github.com/gert-vercleyen/FusionRings}.
\section{Acknowledgements}
The authors would like to thank Dr. Sebastien Palcoux for helpful discussions.
J.K.S. acknowledges financial support from Science Foundation Ireland through Principal Investigator Awards 12/IA/1697 and 16/IA/4524. G. Vercleyen acknowledges financial support from Science Foundation Ireland Principal Investigator Award SFI/16/IA/4524.

\bibliographystyle{plain}
\bibliography{Bibliography.bib}

\appendix
\section{Proof Regarding Song Extensions}\label{ap:Proof}
\begin{proposition}
   Let $G$ be a finite group, $T$ a finite set, and $ \sigma_l:G \times T \rightarrow T: (g,t)\mapsto\sigma_l(g,t) =:g \cdot t $ a left action of $G$ on $T$ such that $T = G \cdot t_1$ for some $t_1 \in T$ and the left stabilizer of $t_1,\ G_{t_1}^l,$ obeys $G_{t_1}^l = H \trianglelefteq  G$. Let $\tilde{g} \in G$, $n \in \mathbb{N} $ and
   \begin{itemize}
     \item $ A: G/H \rightarrow G/H $ be an automorphism such that
      \begin{IEEEeqnarray}{rCl}
         A^2 ([g]) &=& [\tilde{g}^{-1} g \tilde{g}], \ \forall g \in G,\text{ and} \\
         A([\tilde{g}]) &=& [\tilde{g}],
      \end{IEEEeqnarray}
      where $ [\cdot] $ denotes the canonical projection from $G$ to $G/H$,
     \item $ \Phi: T \rightarrow G/H$ be such that $ \Phi(g \cdot t_1) = [g]$, i.e. $\Phi$ maps a $t \in T$ to the class in $G/H$ that maps $t_1$ to $t$, and
     \item $ \lambda: G/H \rightarrow G $ be a lift of the elements of $G/H$ into $G$, i.e.\ $[\lambda(gH)] = g, \forall g \in G$.
   \end{itemize}
   Define $ \sigma_r:T \times G \rightarrow T: \sigma_r(t,g) = \lambda(\Phi(t) A([g])) \cdot t_1 =: t \cdot g$, then
   \begin{enumerate}
     \item $ \sigma_r$ is a transitive right action of $G$ on $T$ with $G_t^r = H \ \forall t \in T$ which is compatible with the left action $ \sigma_l$ in the sense that $ g \cdot (t \cdot g') = (g \cdot t) \cdot g'$, and
     \item the set $G \sqcup  T$ with the following product
       \begin{IEEEeqnarray}{rCl}
       g \times g' &=& g g' \\
       g \times t  &=& g \cdot t \\
       t \times g  &=& t \cdot g' \\
       t \times t' &=& \lambda( \Phi(t) A(\Phi(t')) \tilde{g}^{-1} \sum_{h \in H}h + n \sum_{t \in T}t,
       \end{IEEEeqnarray}
       where $n \in \mathbb{N} $ is a fixed natural number and $g,g' \in G$, $t,t' \in T$, is a fusion ring. Moreover this fusion ring is independent of the specific choice of $ \lambda$.
   \end{enumerate}
\end{proposition}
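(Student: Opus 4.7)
The plan is to tackle the two claims in sequence. Part 1 is essentially coset bookkeeping, and part 2 reduces to checking the fusion ring axioms by a case split on how many factors lie in $T$; the genuinely nontrivial step is the triple product $t \times t' \times t''$, which is precisely where the twist conditions on $A$ and $\tilde{g}$ come into play.

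For part 1, I would first show $\sigma_r$ is a right action. The identity $t \cdot e = t$ follows from $A([e]) = [e]$ together with the fact that two lifts of the same class in $G/H$ act identically on $t_1$ because $H = G^l_{t_1}$. For associativity of the action, I would compute $\Phi(t \cdot g) = \Phi(t) A([g])$ directly and then use that $A$ descends to a group homomorphism on $G/H$ to conclude $(t \cdot g) \cdot g' = t \cdot (gg')$. The stabilizer calculation $G^r_t = H$ is immediate: $t \cdot g = t$ forces $\Phi(t) A([g]) = \Phi(t)$ in $G/H$, and since $A$ is an automorphism $[g] = [e]$, i.e.\ $g \in H$. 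Transitivity is dual: given $t, t' \in T$, solve $A([g]) = \Phi(t)^{-1}\Phi(t')$. Compatibility $g \cdot (t \cdot g') = (g \cdot t) \cdot g'$ follows because $g\lambda(\Phi(t) A([g']))$ and $\lambda([g]\Phi(t) A([g']))$ project to the same class in $G/H$, hence differ by an element of $H$ and act identically on $t_1$.

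For part 2, I would first identify the dual. The coefficient of $e \in G$ in $t \times t'$ is nonzero iff $\lambda(\Phi(t) A(\Phi(t')))\tilde{g}^{-1} \in H$, equivalently $\Phi(t) A(\Phi(t')) = [\tilde{g}]$, which determines $\ol{t}$ uniquely since $A$ is bijective on $G/H$. The condition $\ol{\ol{t}} = t$ then reduces to $A^2(\Phi(t)) = [\tilde{g}]^{-1}\Phi(t)[\tilde{g}]$, which is exactly the hypothesis on $A$ together with $A([\tilde{g}]) = [\tilde{g}]$. Frobenius reciprocity $N_{ab}^c = N_{\ol{a}c}^b$ then splits into cases by the types of $a, b, c$; the cases with at most one $T$-factor follow from the action axioms, and the all-$T$ case reduces to the same coset identity as above.

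The central step is associativity. The cases with zero or one $T$-factor follow from group associativity and the action axioms, and the two-$T$-factor cases reduce to separately matching $G$-components (governed by stabilizers) and $T$-components (governed by the action and $A$). The hard case is $t \times t' \times t''$: expanding both associations, the $G$-part of $(t \times t') \times t''$ contains terms of the form $\lambda(\Phi(t)A(\Phi(t')))\tilde{g}^{-1} h \cdot t''$ for $h \in H$, which must be rewritten using the right action and $A$, producing a factor of $A([\tilde{g}^{-1} h])$; matching this with the symmetrically computed $t \times (t' \times t'')$ is exactly where $A^2([g]) = [\tilde{g}^{-1} g \tilde{g}]$ is needed, and $A([\tilde{g}]) = [\tilde{g}]$ absorbs the residual $\tilde{g}^{-1}$ appearing on both sides. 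Throughout, the $n\sum_{t \in T} t$ contribution is tracked separately using that $g \cdot \sum_{t \in T} t = \sum_{t \in T} t = \sum_{t \in T} t \cdot g$ by transitivity of both actions. Independence of the choice of $\lambda$ is then the final observation: replacing $\lambda$ by $\lambda'$ alters $\lambda(\Phi(t)A(\Phi(t')))$ by left multiplication by some $h_0 \in H$, and since $H \trianglelefteq G$ gives $\tilde{g} h_0 \tilde{g}^{-1} \in H$, we have $h_0 \tilde{g}^{-1} \sum_{h \in H} h = \tilde{g}^{-1}(\tilde{g} h_0 \tilde{g}^{-1}) \sum_{h \in H} h = \tilde{g}^{-1} \sum_{h \in H} h$, so the product formula is unchanged.
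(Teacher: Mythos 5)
Your proposal is correct and follows essentially the same route as the paper's appendix proof: coset computations establishing the right action, its stabilizers, and compatibility; a case split of associativity by the number of $T$-factors, with the hypotheses $A^2([g]) = [\tilde{g}^{-1} g \tilde{g}]$ and $A([\tilde{g}]) = [\tilde{g}]$ entering exactly in the triple products $t \times t' \times g$ and $t \times t' \times t''$; duals pinned down by the coset condition $\Phi(t)A(\Phi(t')) = [\tilde{g}]$; and $\lambda$-independence via normality of $H$ (the paper packages this as a preliminary remark that any two lifts act identically on $T$ and on $\sum_{h \in H} h$). One cosmetic slip worth noting: the $A$-twist you describe actually arises when expanding $t \times (t' \times t'')$ --- since $t \times g = t \cdot g$ routes the group part of $t' \times t''$ through $A$, producing the $A^2$ on $\Phi(t'')$ and $A$ on $[\tilde{g}^{-1}]$ --- whereas in $(t \times t') \times t''$ the group part acts on $t''$ by the plain left action; the identities you invoke are nonetheless exactly the ones the computation requires.
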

\begin{proof}
  We first note that if $f_1,f_2:G^{i} \times T^{j} \rightarrow G$ then for all $g \in G$, $ [f_1(g_1,\ldots,g_i,t_1,\ldots,t_j)] = [f_2(g_1,\ldots,g_i,t_1,\ldots,t_j)]$ implies that
  \begin{IEEEeqnarray}{rCl}
    f_1(g_1,\ldots,g_i,t_1,\ldots,t_j) g \cdot t &=& f_2(g_1,\ldots,g_i,t_1,\ldots,t_j)g \cdot t,\\
    t \cdot g f_1(g_1,\ldots,g_i,t_1,\ldots,t_j)  &=& t \cdot g f_2(g_1,\ldots,g_i,t_1,\ldots,t_j),
  \end{IEEEeqnarray}
  and
  \begin{equation}\label{}
    f_1(g_1,\ldots,g_i,t_1,\ldots,t_j)g \sum_{h \in H}h = f_2(g_1,\ldots,g_i,t_1,\ldots,t_j)g \sum_{h \in H}h.
  \end{equation}
  In particular this means that both $\lambda$ and $\lambda \circ  A$ are homomorphisms as long as their values act on $T$ or $ \sum_{h \in H}h$.

  To reduce the number of parentheses we will write $ \alpha := \lambda \circ  A$, $ \Phi := \lambda \circ  \Phi$, $ \alpha[g] := \alpha([g])$ and $ \lambda[g] := \lambda([g])$ for all $g \in G$. The proof consists of two main parts.

  \noindent
  \textbf{Part 1: $\sigma_r$ defines a compatible right action.} First we note that $G_{t_1}^l = H$ implies that $G_t^l = H$ since for any $t \in T$ there exists a $g \in G$ such that $t = g \cdot t_1$ so $h \cdot t = hg \cdot t_1 = g g^{-1} h g \cdot t_1 = g \cdot t_1$ since $H \trianglelefteq  G$. Similarly $G^r_{t_1}= H \Longrightarrow  G^r_{t} = H, \forall t \in T$.
Let $1$ be the unit of $G$ then $t \cdot 1 = (\Phi(t) \alpha[1]) \cdot t_1 = (\Phi(t) \lambda[1]) \cdot t_1 = t$ and
  \begin{IEEEeqnarray}{rCl}
    (t \cdot g) \cdot g'
    &=& ( \Phi(t) \alpha[g]  \cdot t_1 ) \cdot g' \\
    &=& \Phi( (\Phi(t) \alpha[g] ) \cdot t_1 ) \alpha[g'] \cdot t_1 \\
    &=& \lambda([\Phi(t) \alpha[g]] \Phi(t_1)) \alpha[g'] \cdot t_1 \\
    &=& \lambda[\Phi(t)] \lambda[\alpha[g]] \lambda[1] \alpha[g'] \cdot t_1 \\
    &=& \Phi(t) \alpha[g] \alpha[g'] \cdot t_1 \\
    &=& t \cdot (gg')
  \end{IEEEeqnarray}

so $ \sigma_r $ is a right action. Compatibility with the left action then follows from
\begin{IEEEeqnarray}{rCl}
  (g \cdot t) \cdot g'
  &=& \Phi(g \cdot t) \alpha[g'] \cdot t_1 \\
  &=& \lambda([g]\Phi(t)) \alpha[g'] \cdot t_1 \\
  &=& \lambda[g] \lambda(\Phi(t)) \alpha[g'] \cdot t_1\\
  &=& g \Phi(t) \alpha[g'] \cdot t_1 \\
  &=& g \cdot (\Phi(t) \alpha[g'] \cdot t_1) \\
  &=& g \cdot ( t_1 \cdot g').
\end{IEEEeqnarray}

\noindent
\textbf{Part 2: $R$ is a fusion ring.} To prove that $R$ is a fusion ring we first prove that the product is associative, i.e., that the following eight equations are satisfied:

\begin{IEEEeqnarray}{rCl}
  ( g \times g' ) \times g'' &=& g \times ( g' \times g'' ), \label{eq:ass1} \\
  ( g \times g' ) \times t   &=& g \times ( g' \times t ), \label{eq:ass2} \\
  ( g \times t ) \times g'   &=& g \times ( t \times g' ), \label{eq:ass3} \\
  ( t \times g ) \times g'   &=& t \times ( g \times g' ), \label{eq:ass4} \\
  ( g \times t ) \times t'   &=& g \times ( t \times t' ), \label{eq:ass5} \\
  ( t \times g ) \times t'   &=& t \times ( g \times t' ), \label{eq:ass6} \\
  ( t \times t' ) \times g   &=& t \times ( t' \times g ), \label{eq:ass7} \\
  ( t \times t' ) \times t'' &=& t \times ( t' \times t'' ). \label{eq:ass8} \\
\end{IEEEeqnarray}
Equations (\ref{eq:ass1}),  (\ref{eq:ass2}), (\ref{eq:ass3}), and (\ref{eq:ass4}) are satisfied because the left-and right multiplication by group elements are compatible actions. To prove the other four equations it will be useful to define
\begin{IEEEeqnarray}{rCl}
  U &:=& \tilde{g}^{-1} \sum_{h \in H}h,\quad \text{ and }\\
  V &:=& n \sum_{t \in T}t.
\end{IEEEeqnarray}
Equation (\ref{eq:ass5}) is satisfied since
\begin{IEEEeqnarray}{rCl}
  (g \times t) \times t'
  &=& \Phi(g \cdot t) \alpha(\Phi(t')) U + V \\
  &=& g \Phi(t) \alpha(\Phi(t')) U + V \\
  &=& g \times ( \Phi(t) \alpha( \Phi(t'))U + V ) \\
  &=& g \times (t \times t').
\end{IEEEeqnarray}
Equation (\ref{eq:ass6}) is satisfied since
\begin{IEEEeqnarray}{rCl}
  (t \times g) \times t'
  &=& (\Phi(t) \alpha[g] \cdot t_1) \times t' \\
  &=& \Phi( \Phi(t) \alpha[g] \cdot t_1) \alpha(\Phi(t')) U + V \\
  &=& \Phi(t)\alpha[g] \Phi(t_1) \alpha(\Phi(t')) U + V \\
  &=& \Phi(t)\alpha[g] \alpha(\Phi(t')) U + V \\
  &=& \Phi(t)\alpha( [g]  \Phi(t')) U + V \\
  &=& \Phi(t)\alpha[ \lambda( [g] \Phi(t'))] U + V \\
  &=& \Phi(t)\alpha[\Phi(g \cdot t' )] U + V\\
  &=& \Phi(t)\alpha( \Phi( g \cdot t')) U + V \\
  &=& t \times (g \cdot t') \\
  &=& t \times (g \times t').
\end{IEEEeqnarray}
Equation (\ref{eq:ass7}) is satisfied since
\begin{IEEEeqnarray}{rCl}
  (t \times t') \times g
  &=& (\Phi(t) \alpha(\Phi(t'))U+V) \cdot g \\
  &=& \Phi(t) \alpha(\Phi(t')) U g + V \\
  &=& \Phi(t) \alpha(\Phi(t')) \tilde{g}^{-1} g \sum_{h \in H}H + V
\end{IEEEeqnarray}
while
\begin{IEEEeqnarray}{rCl}
  t \times (t' \times g)
  &=& t \times (\Phi(t) \alpha[g] \cdot t_1 ) \\
  &=& \Phi(t) \alpha(\Phi(\Phi(t) \alpha[g] \cdot t_1)) U + V \\
  &=& \Phi(t) \alpha[ \Phi(\Phi(t) \alpha[g] \cdot t_1)] U + V \\
  &=& \Phi(t) \alpha[ \Phi(t)  \alpha[g] ] U + V \\
  &=& \Phi(t) \alpha[\Phi(t)] \alpha[\alpha[g]] U+V \\
  &=& \Phi(t) \alpha(\Phi(t)) \lambda((A)^2[g]) U + V \\
  &=& \Phi(t) \alpha(\Phi(t)) \lambda[\tilde{g}^{-1} g \tilde{g}]U+V\\
  &=& \Phi(t) \alpha(\Phi(t)) \tilde{g}^{-1} g \sum_{h \in H}h + V
\end{IEEEeqnarray}
Equation (\ref{eq:ass8}) is satisfied since
\begin{IEEEeqnarray}{l}
  t \times (t' \times t'')\\
= t \times \left(\Phi(t') \alpha(\Phi(t''))\tilde{g}^{-1}\sum_{h \in H}h +n \sum_{\tau \in T}\tau\right) \\
= \Phi(t) \sum_{h \in H} \alpha[\Phi(t') \alpha(\Phi(t'')) \tilde{g}^{-1} h )] \cdot t_1 + n \sum_{\tau \in T}\left(\Phi(t)  \alpha(\Phi(\tau)) U + V\right)\\
= \Phi(t) \sum_{h \in H}\alpha([\Phi(t')][\alpha(\Phi(t''))] [\tilde{g}^{-1}] [h]) \cdot t_1 + n \Phi(t) \sum_{\tau \in T}\sum_{h \in H}\alpha(\Phi(\tau)) \tilde{g}^{-1} h + n \sum_{\tau \in T}V\\
= |H| \Phi(t)\alpha(\Phi(t'))) \alpha([\alpha(\Phi(t''))]) \alpha[\tilde{g}^{-1} ] \cdot t_1 + n \Phi(t)\sum_{g \in G}g + n|T|V \\
= |H| \Phi(t) \alpha(\Phi(t')) \lambda([\tilde{g}^{-1} \Phi(t'')\tilde{g}]) \alpha[\tilde{g}^{-1}] \cdot t_1 + n \Phi(t)\sum_{g \in G}g + n|T|V \\
= |H| \Phi(t) \alpha(\Phi(t')) \tilde{g}^{-1} \Phi(t'') \cdot t_1 +n \Phi(t)\sum_{g \in G}g + n|T|V,
\end{IEEEeqnarray}
while
\begin{IEEEeqnarray}{l}
  (t \times t') \times t'' \\
= (\Phi(t) \alpha(\Phi(t'))U + V) \times t'' \\
= \Phi(t) \alpha(\Phi(t')) \tilde{g}^{-1} \sum_{h \in H}h t'' + n \sum_{\tau \in T}\tau \times t'' \\
= |H|\Phi(t)\alpha(\Phi(t'))\ \tilde{g}^{-1}\ \Phi(t'') \cdot t_1 + n \sum_{\tau \in T}\Phi(\tau)\alpha(\Phi(t''))\tilde{g}^{-1} \sum_{h \in H}h + n \sum_{\tau \in T} V \\
= |H|\Phi(t)\alpha(\Phi(t'))\tilde{g}^{-1} \Phi(t'') \cdot t_1 + n \Phi(t)\sum_{g \in G}g + n|T|V.
\end{IEEEeqnarray}
The identity $1$ of $G$ is still an identity of $R$ and it is clearly still unique. Every element of $G$ retains its unique dual element. For the elements of $T$ the dual elements are uniquely fixed by $\tilde{g}$. Indeed, since
\begin{IEEEeqnarray}{rCl}
   t_1 \times t_1
   &=& \tilde{g}^{-1} \sum_{h \in H}h + V \\
   &=& \tilde{g}^{-1}\left(\sum_{h \in H}h + V\right) \\
   &=& \left(\sum_{h \in H}h + V\right)\tilde{g}^{-1}
\end{IEEEeqnarray}
$\tilde{g} \cdot t_1 = t_1 \cdot \tilde{g}$ is a two sided dual of $t_1$. Since for any $t \in T, t = \Phi(t)\cdot t_1$, we have that
\begin{IEEEeqnarray}{rCl}\label{eq:dualt}
   \ol{t} &=& \tilde{g} \cdot t_1 \cdot (\Phi(t))^{-1}\ \forall t \in T.
\end{IEEEeqnarray}   The dual of any $t \in T$ is unique since every element of $T$ can be obtained from $t_1$ by group multiplication and from (\ref{eq:dualt}) it follows directly that the dual of the dual is the original element itself.
\end{proof}

\section{Naming of Fusion Rings}\label{s:namingoffusionrings}
Many of the fusion rings are yet unnamed, which is inconvenient for referencing specific rings. To resolve this issue we constructed a naming scheme that uniquely characterizes a fusion ring by using $3$ to $4$ natural numbers. Every fusion ring is denoted by $\text{FR}^{r,m,n}_i$ where
\begin{itemize}
   \item $r$ denotes the rank of the fusion ring,
   \item $m$ equals the multiplicity of the fusion ring, i.e. the largest structure constant, and
   \item $n$ denotes the number of particles that are not self-dual.
\end{itemize}
Every triple $(r,m,n)$ determines a finite list of fusion rings which can be sorted in a canonical way. The number $i$ then denotes the position of the fusion ring in such a list. The canonical order which we implemented is based on a unique number assigned to each fusion ring that can be computed as follows. First, we sort the fusion rings by multiplicity, rank, number of non-self-dual particles, and number of non-zero structure constants. If there is any remaining ambiguity we sort further by calculating a unique number for each fusion ring as follows.
\begin{enumerate}
   \item First, the elements of the rings themselves are sorted. All self-dual particles are grouped and the non-self-dual particles are grouped in dual pairs. The self-dual particles appear before the pairs and are sorted by increasing quantum dimension. The pairs are sorted on increasing maximum quantum dimension.
   \item Let $N_{a,b}^c$ denote the structure constants of a multiplication table of a fusion ring. For each permutation of the elements of the ring (where the identity is kept fixed, and the previous ordering is kept fixed) a list of digits can be created by ordering the structure constants $N_{a,b}^c$ lexicographically on $a,b,c$:
\[
   N_{1,1}^1,N_{1,1}^2,\ldots,N_{1,2}^1,N_{1,2}^2,\ldots,N_{r,r}^r.
\]
\item By regarding each of these lists as a number with digits given by the list elements, a unique number is assigned to each permutation of the same ring. Moreover, these numbers are unique for each different fusion ring.
\item Taking the maximum value of the numbers generated per ring, a unique number is then assigned to each fusion ring.
\end{enumerate}
If $ m=1$ we omit this value and just write $\text{FR}^{r,n}_i$.

\section{List of multiplicity-free fusion rings up to rank $9$}
A list of multiplicity-free fusion rings of rank up to $9$ is given in the table below. Here $n$ denotes the number of non-self-dual particles, $N$ the number of non-zero structure constants, $D_{FP}^2$ the sum of the squares of the quantum dimensions of the elements of the ring, ZSC whether the fusion ring admits no categorification due the zero spectrum criterion, and CSPC whether the ring admits no unitary categorification due to the commutative Schur product criterion. Note that Abelian means that the ring is commutative, not that the Anyons are Abelian anyons (which means the ring is a group ring).
\begin{longtable}{lllllllllll}
   \hline
   Name & Common Name & $r$ & $n$ & $N$ & $D_{\text{FP}}$ & $D_{\text{FP}}^2$ & Abelian & Modular & ZSC & CSPC \\
 \hline
 $\text{FR}_1^{\text{1,0}} $ & $\text{Trivial} $ & 1 & 0 & 1 & 1. & 1. & True & True & False & False \\
 $ \text{FR}_1^{\text{2,0}} $ & $\mathbb{Z}_2$ & 2 & 0 & 4 & 1.41421 & 2. & True & True & False & False \\
 $ \text{FR}_2^{\text{2,0}} $ & $\text{Fib} $ & 2 & 0 & 5 & 1.90211 & 3.61803 & True & True & False & False \\
 $ \text{FR}_1^{\text{3,0}} $ & $\text{Ising} $ & 3 & 0 & 10 & 2. & 4. & True & True & False & False \\
 $ \text{FR}_2^{\text{3,0}} $ & $\left.\text{Rep(}D_3\right)$ & 3 & 0 & 11 & 2.44949 & 6. & True & False & False & False \\
 $ \text{FR}_3^{\text{3,0}} $ & $\text{PSU}(2)_5$ & 3 & 0 & 14 & 3.04892 & 9.2959 & True & True & False & False \\
 $ \text{FR}_1^{\text{3,2}} $ & $ \mathbb{Z}_3$ & 3 & 2 & 9 & 1.73205 & 3. & True & True & False & False \\
 $ \text{FR}_1^{\text{4,0}} $ & $ \mathbb{Z}_2\times \mathbb{Z}_2$ & 4 & 0 & 16 & 2. & 4. & True & True & False & False \\
 $ \text{FR}_2^{\text{4,0}} $ & $ \text{SU(2})_3$ & 4 & 0 & 20 & 2.68999 & 7.23607 & True & True & False & False \\
 $ \text{FR}_3^{\text{4,0}} $ & $ \left.\text{Rep(}D_5\right)$ & 4 & 0 & 22 & 3.16228 & 10. & True & False & False & False \\
 $ \text{FR}_4^{\text{4,0}} $ & $ \text{PSU(2})_6$ & 4 & 0 & 24 & 3.69552 & 13.6569 & True & False & False & False \\
 $ \text{FR}_5^{\text{4,0}} $ & $ \text{Fib$\times $Fib} $ & 4 & 0 & 25 & 3.61803 & 13.0902 & True & True & False & False \\
 $ \text{FR}_6^{\text{4,0}} $ & $ \text{PSU(2})_7$ & 4 & 0 & 30 & 4.38571 & 19.2344 & True & True & False & False \\
 $ \text{FR}_1^{\text{4,2}} $ & $ \mathbb{Z}_4$ & 4 & 2 & 16 & 2. & 4. & True & True & False & False \\
 $ \text{FR}_2^{\text{4,2}} $ & $ \text{Potts} $ & 4 & 2 & 18 & 2.44949 & 6. & True & False & False & False \\
 $ \text{FR}_3^{\text{4,2}} $ & $ \left.\text{Fib(}\mathbb{Z}_3\right)$ & 4 & 2 & 19 & 2.88145 & 8.30278 & True & False & False & False \\
 $ \text{FR}_4^{\text{4,2}} $ & $ \text{Pseudo PSU(2})_6$ & 4 & 2 & 24 & 3.69552 & 13.6569 & True & False & False & False \\
 $ \text{FR}_1^{\text{5,0}} $ & $ \left.\text{Rep(}D_4\right)$ & 5 & 0 & 28 & 2.82843 & 8. & True & False & False & False \\
 $ \text{FR}_2^{\text{5,0}} $ & $ \left.\text{Fib(}\mathbb{Z}_2\times \mathbb{Z}_2\right) $ & 5 & 0 & 29 & 3.24985 & 10.5616 & True & False & False & False \\
 $ \text{FR}_3^{\text{5,0}} $ & $ \text{SU(2})_4$ & 5 & 0 & 35 & 3.4641 & 12. & True & True & False & False \\
 $ \text{FR}_4^{\text{5,0}} $ & $ \left.\text{Rep(}D_7\right) $ & 5 & 0 & 37 & 3.74166 & 14. & True & False & False & False \\
 $ \text{FR}_5^{\text{5,0}} $ & $ \text{} $ & 5 & 0 & 39 & 4.07499 & 16.6056 & True & False & True & True \\
 $ \text{FR}_6^{\text{5,0}} $ & $ \left.\text{Rep(}S_4\right) $ & 5 & 0 & 43 & 4.89898 & 24. & True & False & False & False \\
 $ \text{FR}_7^{\text{5,0}} $ & $ \text{PSU(2})_8$ & 5 & 0 & 45 & 5.11667 & 26.1803 & True & False & False & False \\
 $ \text{FR}_8^{\text{5,0}} $ & $ \text{} $ & 5 & 0 & 48 & 5.57605 & 31.0923 & True & False & False & False \\
 $ \text{FR}_9^{\text{5,0}} $ & $ \text{} $ & 5 & 0 & 53 & 5.49019 & 30.1421 & True & False & True & True \\
 $ \text{FR}_{10}^{\text{5,0}} $ & $ \text{PSU(2})_9$ & 5 & 0 & 55 & 5.88612 & 34.6464 & True & True & False & False \\
 $ \text{FR}_1^{\text{5,2}} $ & $ \left.\text{TY(}\mathbb{Z}_4\right) $ & 5 & 2 & 28 & 2.82843 & 8. & True & False & False & False \\
 $ \text{FR}_2^{\text{5,2}} $ & $ \left.\text{Fib(}\mathbb{Z}_4\right) $ & 5 & 2 & 29 & 3.24985 & 10.5616 & True & False & False & False \\
 $ \text{FR}_3^{\text{5,2}} $ & $ \text{Pseudo SU(2})_4$ & 5 & 2 & 35 & 3.4641 & 12. & True & False & False & False \\
 $ \text{FR}_4^{\text{5,2}} $ & $ \left.\text{Pseudo Rep(}S_4\right) $ & 5 & 2 & 43 & 4.89898 & 24. & True & False & False & False \\
 $ \text{FR}_5^{\text{5,2}} $ & $ \text{} $ & 5 & 2 & 48 & 5.57605 & 31.0923 & True & False & False & False \\
 $ \text{FR}_1^{\text{5,4}} $ & $ \mathbb{Z}_5$ & 5 & 4 & 25 & 2.23607 & 5. & True & True & False & False \\
 $ \text{FR}_1^{\text{6,0}} $ & $ \mathbb{Z}_2\text{$\times $Ising} $ & 6 & 0 & 40 & 2.82843 & 8. & True & True & False & False \\
 $ \text{FR}_2^{\text{6,0}} $ & $ \left.\mathbb{Z}_2\text{$\times $Rep(}D_3\right) $ & 6 & 0 & 44 & 3.4641 & 12. & True & False & False & False \\
 $ \text{FR}_3^{\text{6,0}} $ & $ [\mathbb{Z}_2 \trianglelefteq \mathbb{Z}_2\times \
\mathbb{Z}_2]_{\mathbf{1}|1}^{\text{Id}} $ & 6 & 0 & 48 & 4.35066 & 18.9282 & True & False & False & False \\
 $ \text{FR}_4^{\text{6,0}} $ & $ \text{TriCritIsing} $ & 6 & 0 & 50 & 3.80423 & 14.4721 & True & True & False & False \\
 $ \text{FR}_5^{\text{6,0}} $ & $ \left.\text{Fib$\times $Rep(}D_3\right) $ & 6 & 0 & 55 & 4.65921 & 21.7082 & True & False & False & False \\
 $ \text{FR}_6^{\text{6,0}} $ & $ \text{SU(2})_5$ & 6 & 0 & 56 & 4.31182 & 18.5918 & True & True & False & False \\
 $ \text{FR}_7^{\text{6,0}} $ & $ \text{} $ & 6 & 0 & 56 & 4.24264 & 18. & True & False & False & False \\
 $ \text{FR}_8^{\text{6,0}} $ & $ \text{} $ & 6 & 0 & 56 & 4.24264 & 18. & True & False & False & False \\
 $ \text{FR}_9^{\text{6,0}} $ & $ \text{} $ & 6 & 0 & 58 & 4.47214 & 20. & True & True & False & False \\
 $ \text{FR}_{10}^{\text{6,0}} $ & $ \text{} $ & 6 & 0 & 62 & 5.05792 & 25.5826 & True & False & False & True \\
 $ \text{FR}_{11}^{\text{6,0}} $ & $ \text{} $ & 6 & 0 & 65 & 5.32844 & 28.3923 & True & False & True & True \\
 $ \text{FR}_{12}^{\text{6,0}} $ & $ \text{} $ & 6 & 0 & 65 & 5.32844 & 28.3923 & True & False & True & True \\
 $ \text{FR}_{13}^{\text{6,0}} $ & $ \text{} $ & 6 & 0 & 66 & 5.8136 & 33.798 & True & False & False & False \\
 $ \text{FR}_{14}^{\text{6,0}} $ & $ \left.\text{Fib$\times $ExtRep(}D_3\right) $ & 6 & 0 & 70 & 5.79939 & 33.6329 & True & True & False & False \\
 $ \text{FR}_{15}^{\text{6,0}} $ & $ \text{} $ & 6 & 0 & 72 & 6.06459 & 36.7792 & True & False & True & True \\
 $ \text{FR}_{16}^{\text{6,0}} $ & $ \text{PSU}(2)_{10} $ & 6 & 0 & 76 & 6.69213 & 44.7846 & True & False & False & False \\
 $ \text{FR}_{17}^{\text{6,0}} $ & $ \text{} $ & 6 & 0 & 83 & 7.42591 & 55.1442 & True & False & True & True \\
 $ \text{FR}_{18}^{\text{6,0}} $ & $ \text{PSU}(2)_{11} $ & 6 & 0 & 91 & 7.53304 & 56.7468 & True & True & False & False \\
 $ \text{FR}_{19}^{\text{6,0}} $ & $ \text{} $ & 6 & 0 & 101 & 7.94652 & 63.1472 & True & False & False & True \\
 $ \text{FR}_{20}^{\text{6,0}} $ & $ \text{} $ & 6 & 0 & 101 & 7.94652 & 63.1472 & True & False & True & True \\
 $ \text{FR}_1^{\text{6,2}} $ & $D_3$ & 6 & 2 & 36 & 2.44949 & 6. & False & False & False & False \\
 $ \text{FR}_2^{\text{6,2}} $ & $ [\mathbb{Z}_2 \trianglelefteq \mathbb{Z}_4]_{\mathbf{1}|0}^{\text{Id}} $ & 6 & 2 & 40 & 2.82843 & 8. & True & False & False & False \\
 $ \text{FR}_3^{\text{6,2}} $ & $ [\mathbb{Z}_2 \trianglelefteq \mathbb{Z}_2\times  \mathbb{Z}_2]_{\mathbf{3}|0}^{\text{Id}} $ & 6 & 2 & 40 & 2.82843 & 8. & True & False & False & False \\
 $ \text{FR}_4^{\text{6,2}} $ & $ \left.\text{Rep(}\text{Dic}_{12}\right) $ & 6 & 2 & 44 & 3.4641 & 12. & True & False & False & False \\
 $ \text{FR}_5^{\text{6,2}} $ & $ [\mathbb{Z}_2 \trianglelefteq \mathbb{Z}_4]_{\mathbf{1}|1}^{\text{Id}} $ & 6 & 2 & 48 & 4.35066 & 18.9282 & True & False & False & False \\
 $ \text{FR}_6^{\text{6,2}} $ & $ [\mathbb{Z}_2 \trianglelefteq \mathbb{Z}_2\times \
\mathbb{Z}_2]_{\mathbf{3}|1}^{\text{Id}} $ & 6 & 2 & 48 & 4.35066 & 18.9282 & True & False & False & False \\
 $ \text{FR}_7^{\text{6,2}} $ & $ \text{Pseudo SO(5})_2$ & 6 & 2 & 58 & 4.47214 & 20. & True & False & False & False \\
 $ \text{FR}_8^{\text{6,2}} $ & $ \left.\text{HI(}\mathbb{Z}_3\right) $ & 6 & 2 & 63 & 5.97704 & 35.725 & False & False & False & False \\
 $ \text{FR}_9^{\text{6,2}} $ & $ \text{} $ & 6 & 2 & 66 & 5.8136 & 33.798 & True & False & False & False \\
 $ \text{FR}_{10}^{\text{6,2}} $ & $ \text{} $ & 6 & 2 & 72 & 6.06459 & 36.7792 & True & False & True & True \\
 $ \text{FR}_{11}^{\text{6,2}} $ & $ \text{} $ & 6 & 2 & 83 & 7.42591 & 55.1442 & True & False & True & True \\
 $ \text{FR}_1^{\text{6,4}} $ & $ \mathbb{Z}_6$ & 6 & 4 & 36 & 2.44949 & 6. & True & False & False & False \\
 $ \text{FR}_2^{\text{6,4}} $ & $ \text{MR}_6$ & 6 & 4 & 40 & 2.82843 & 8. & True & False & False & False \\
 $ \text{FR}_3^{\text{6,4}} $ & $ \left.\text{TY(}\mathbb{Z}_5\right) $ & 6 & 4 & 40 & 3.16228 & 10. & True & False & False & False \\
 $ \text{FR}_4^{\text{6,4}} $ & $ [\mathbb{Z}_5 \trianglelefteq \mathbb{Z}_5]_{\mathbf{1}|1}^{\text{Id}} $ & 6 & 4 & 41 & 3.57649 & 12.7913 & True & False & False & False \\
 $ \text{FR}_5^{\text{6,4}} $ & $ \text{Fib$\times $}\mathbb{Z}_3$ & 6 & 4 & 45 & 3.29456 & 10.8541 & True & True & False & False \\
 $ \text{FR}_6^{\text{6,4}} $ & $ [\mathbb{Z}_2 \trianglelefteq \mathbb{Z}_4]_{\mathbf{3}|1}^{\text{Id}} $ & 6 & 4 & 48 & 4.35066 & 18.9282 & True & False & False & False \\
 $ \text{FR}_7^{\text{6,4}} $ & $ \text{} $ & 6 & 4 & 54 & 4.52607 & 20.4853 & True & False & False & False \\
 $ \text{FR}_8^{\text{6,4}} $ & $ [I \trianglelefteq \mathbb{Z}_3]_{\mathbf{1}|1}^{\text{Id}} $ & 6 & 4 & 63 & 5.97704 & 35.725 & True & False & False & False \\
 $ \text{FR}_1^{\text{7,0}} $ & $ \text{} $ & 7 & 0 & 64 & 4. & 16. & True & False & False & False \\
 $ \text{FR}_2^{\text{7,0}} $ & $ \text{} $ & 7 & 0 & 68 & 4.59599 & 21.1231 & True & False & True & True \\
 $ \text{FR}_3^{\text{7,0}} $ & $ \text{} $ & 7 & 0 & 71 & 5.21092 & 27.1537 & True & False & False & True \\
 $ \text{FR}_4^{\text{7,0}} $ & $ \text{} $ & 7 & 0 & 72 & 5.37999 & 28.9443 & True & False & False & False \\
 $ \text{FR}_5^{\text{7,0}} $ & $ \text{} $ & 7 & 0 & 74 & 5.4277 & 29.46 & True & False & False & False \\
 $ \text{FR}_6^{\text{7,0}} $ & $ \text{} $ & 7 & 0 & 79 & 4.69042 & 22. & True & False & False & False \\
 $ \text{FR}_7^{\text{7,0}} $ & $ \text{SU}(2)_6$ & 7 & 0 & 84 & 5.22625 & 27.3137 & True & True & False & False \\
 $ \text{FR}_8^{\text{7,0}} $ & $ \text{} $ & 7 & 0 & 85 & 5.2915 & 28. & True & True & False & False \\
 $ \text{FR}_9^{\text{7,0}} $ & $ \text{} $ & 7 & 0 & 89 & 5.86389 & 34.3852 & True & False & False & True \\
 $ \text{FR}_{10}^{\text{7,0}} $ & $ \text{} $ & 7 & 0 & 93 & 6.58132 & 43.3137 & True & False & False & False \\
 $ \text{FR}_{11}^{\text{7,0}} $ & $ \text{} $ & 7 & 0 & 94 & 6.08034 & 36.9706 & True & False & True & True \\
 $ \text{FR}_{12}^{\text{7,0}} $ & $ \text{} $ & 7 & 0 & 99 & 6.48074 & 42. & True & False & True & True \\
 $ \text{FR}_{13}^{\text{7,0}} $ & $ \text{} $ & 7 & 0 & 103 & 7.2753 & 52.93 & True & False & True & True \\
 $ \text{FR}_{14}^{\text{7,0}} $ & $ \text{PSU}(2)_{12} $ & 7 & 0 & 119 & 8.40743 & 70.6848 & True & False & False & False \\
 $ \text{FR}_{15}^{\text{7,0}} $ & $ \text{} $ & 7 & 0 & 129 & 9.03712 & 81.6695 & True & False & False & True \\
 $ \text{FR}_{16}^{\text{7,0}} $ & $ \text{} $ & 7 & 0 & 131 & 9.3324 & 87.0937 & True & False & True & True \\
 $ \text{FR}_{17}^{\text{7,0}} $ & $ \text{PSU}(2)_{13} $ & 7 & 0 & 140 & 9.31401 & 86.7508 & True & True & False & False \\
 $ \text{FR}_{18}^{\text{7,0}} $ & $ \text{} $ & 7 & 0 & 175 & 10.8691 & 118.138 & True & False & False & True \\
 $ \text{FR}_1^{\text{7,2}} $ & $ \text{TY}(D_3) $ & 7 & 2 & 54 & 3.4641 & 12. & False & False & False & False \\
 $ \text{FR}_2^{\text{7,2}} $ & $ [D_3 \trianglelefteq D_3]_{\mathbf{1}|1}^{\text{Id}} $ & 7 & 2 & 55 & 3.87298 & 15. & False & False & False & False \\
 $ \text{FR}_3^{\text{7,2}} $ & $ \text{} $ & 7 & 2 & 64 & 4. & 16. & True & False & False & False \\
 $ \text{FR}_4^{\text{7,2}} $ & $ \text{} $ & 7 & 2 & 64 & 4. & 16. & True & False & False & False \\
 $ \text{FR}_5^{\text{7,2}} $ & $ \text{} $ & 7 & 2 & 68 & 4.59599 & 21.1231 & True & False & True & True \\
 $ \text{FR}_6^{\text{7,2}} $ & $ \text{} $ & 7 & 2 & 71 & 5.21092 & 27.1537 & True & False & False & True \\
 $ \text{FR}_7^{\text{7,2}} $ & $ \text{} $ & 7 & 2 & 71 & 5.21092 & 27.1537 & True & False & False & True \\
 $ \text{FR}_8^{\text{7,2}} $ & $ \text{} $ & 7 & 2 & 72 & 5.37999 & 28.9443 & True & False & False & False \\
 $ \text{FR}_9^{\text{7,2}} $ & $ \text{} $ & 7 & 2 & 72 & 5.37999 & 28.9443 & True & False & False & False \\
 $ \text{FR}_{10}^{\text{7,2}} $ & $ \text{} $ & 7 & 2 & 74 & 5.4277 & 29.46 & True & False & False & False \\
 $ \text{FR}_{11}^{\text{7,2}} $ & $ \text{} $ & 7 & 2 & 84 & 5.22625 & 27.3137 & True & True & False & False \\
 $ \text{FR}_{12}^{\text{7,2}} $ & $ \text{} $ & 7 & 2 & 85 & 5.2915 & 28. & True & False & False & False \\
 $ \text{FR}_{13}^{\text{7,2}} $ & $ \text{} $ & 7 & 2 & 93 & 6.58132 & 43.3137 & True & False & False & False \\
 $ \text{FR}_{14}^{\text{7,2}} $ & $ \text{} $ & 7 & 2 & 103 & 7.2753 & 52.93 & True & False & True & True \\
 $ \text{FR}_{15}^{\text{7,2}} $ & $ \text{} $ & 7 & 2 & 108 & 8.42685 & 71.0118 & False & False & False & False \\
 $ \text{FR}_{16}^{\text{7,2}} $ & $ \text{} $ & 7 & 2 & 129 & 9.03712 & 81.6695 & True & False & False & True \\
 $ \text{FR}_{17}^{\text{7,2}} $ & $ \text{} $ & 7 & 2 & 131 & 9.3324 & 87.0937 & True & False & True & True \\
 $ \text{FR}_1^{\text{7,4}} $ & $ \left.\text{TY(}\mathbb{Z}_2\times \mathbb{Z}_3\right) $ & 7 & 4 & 54 & 3.4641 & 12. & True & False & False & False \\
 $ \text{FR}_2^{\text{7,4}} $ & $ [\mathbb{Z}_6 \trianglelefteq \mathbb{Z}_6]_{\mathbf{1}|1}^{\text{Id}} $ & 7 & 4 & 55 & 3.87298 & 15. & True & False & False & False \\
 $ \text{FR}_3^{\text{7,4}} $ & $ \text{} $ & 7 & 4 & 64 & 4. & 16. & True & False & False & False \\
 $ \text{FR}_4^{\text{7,4}} $ & $ \text{} $ & 7 & 4 & 71 & 5.21092 & 27.1537 & True & False & False & True \\
 $ \text{FR}_5^{\text{7,4}} $ & $ \text{} $ & 7 & 4 & 72 & 5.37999 & 28.9443 & True & False & False & False \\
 $ \text{FR}_6^{\text{7,4}} $ & $ \text{} $ & 7 & 4 & 100 & 7.56541 & 57.2354 & True & False & False & False \\
 $ \text{FR}_7^{\text{7,4}} $ & $ \text{} $ & 7 & 4 & 108 & 8.42685 & 71.0118 & True & False & False & False \\
 $ \text{FR}_1^{\text{7,6}} $ & $ \mathbb{Z}_7$ & 7 & 6 & 49 & 2.64575 & 7. & True & True & False & False \\
 $ \text{FR}_1^{\text{8,0}} $ & $ \mathbb{Z}_2\times \mathbb{Z}_2\times \mathbb{Z}_2$ & 8 & 0 & 64 & 2.82843 & 8. & True & True & False & False \\
 $ \text{FR}_2^{\text{8,0}} $ & $ \text{Fib$\times $}\mathbb{Z}_2\times \mathbb{Z}_2$ & 8 & 0 & 80 & 3.80423 & 14.4721 & True & True & False & False \\
 $ \text{FR}_3^{\text{8,0}} $ & $ \text{Rep(}D_5\text{)$\times $}\mathbb{Z}_2$ & 8 & 0 & 88 & 4.47214 & 20. & True & False & False & False \\
 $ \text{FR}_4^{\text{8,0}} $ & $ \text{} $ & 8 & 0 & 92 & 4.89898 & 24. & True & False & False & False \\
 $ \text{FR}_5^{\text{8,0}} $ & $ \text{PSU}(2)_6\times \mathbb{Z}_2$ & 8 & 0 & 96 & 5.22625 & 27.3137 & True & False & False & False \\
 $ \text{FR}_6^{\text{8,0}} $ & $ \text{} $ & 8 & 0 & 96 & 5.47723 & 30. & True & False & False & True \\
 $ \text{FR}_7^{\text{8,0}} $ & $ \text{Fib$\times $Fib$\times $}\mathbb{Z}_2$ & 8 & 0 & 100 & 5.11667 & 26.1803 & True & True & False & False \\
 $ \text{FR}_8^{\text{8,0}} $ & $ \text{} $ & 8 & 0 & 100 & 6.21152 & 38.583 & True & False & False & False \\
 $ \text{FR}_9^{\text{8,0}} $ & $ \text{} $ & 8 & 0 & 106 & 5.09902 & 26. & True & False & False & False \\
 $ \text{FR}_{10}^{\text{8,0}} $ & $ \text{} $ & 8 & 0 & 108 & 6.51602 & 42.4585 & True & False & False & True \\
 $ \text{FR}_{11}^{\text{8,0}} $ & $ \left.\text{Fib$\times $}\text{Rep}(D_5\right) $ & 8 & 0 & 110 & 6.01501 & 36.1803 & True & False & False & False \\
 $ \text{FR}_{12}^{\text{8,0}} $ & $ \text{} $ & 8 & 0 & 112 & 6.90169 & 47.6333 & True & False & False & True \\
 $ \text{FR}_{13}^{\text{8,0}} $ & $ \text{} $ & 8 & 0 & 116 & 6. & 36. & True & True & False & False \\
 $ \text{FR}_{14}^{\text{8,0}} $ & $ \text{} $ & 8 & 0 & 116 & 6. & 36. & True & True & False & False \\
 $ \text{FR}_{15}^{\text{8,0}} $ & $ \text{SU}(2)_7$ & 8 & 0 & 120 & 6.20233 & 38.4688 & True & True & False & False \\
 $ \text{FR}_{16}^{\text{8,0}} $ & $ \text{Fib$\times $}\text{PSU}(2)_6$ & 8 & 0 & 120 & 7.02929 & 49.411 & True & False & False & False \\
 $ \text{FR}_{17}^{\text{8,0}} $ & $ \text{} $ & 8 & 0 & 120 & 6.56375 & 43.0828 & True & False & False & True \\
 $ \text{FR}_{18}^{\text{8,0}} $ & $ \text{} $ & 8 & 0 & 120 & 6.56375 & 43.0828 & True & False & False & True \\
 $ \text{FR}_{19}^{\text{8,0}} $ & $ \text{} $ & 8 & 0 & 124 & 8.28637 & 68.6639 & True & False & False & False \\
 $ \text{FR}_{20}^{\text{8,0}} $ & $ \text{} $ & 8 & 0 & 124 & 7.25597 & 52.6491 & True & False & False & False \\
 $ \text{FR}_{21}^{\text{8,0}} $ & $ \text{} $ & 8 & 0 & 124 & 7.25597 & 52.6491 & True & False & False & False \\
 $ \text{FR}_{22}^{\text{8,0}} $ & $ \text{Fib$\times $Fib$\times $Fib} $ & 8 & 0 & 125 & 6.88191 & 47.3607 & True & True & False & False \\
 $ \text{FR}_{23}^{\text{8,0}} $ & $ \left.\text{HI}(\mathbb{Z}_2\times \mathbb{Z}_2\right) $ & 8 & 0 & 128 & 8.705 & 75.7771 & True & False & False & False \\
 $ \text{FR}_{24}^{\text{8,0}} $ & $ \text{} $ & 8 & 0 & 130 & 6.93315 & 48.0685 & True & False & True & True \\
 $ \text{FR}_{25}^{\text{8,0}} $ & $ \text{} $ & 8 & 0 & 142 & 7.62344 & 58.1168 & True & False & False & True \\
 $ \text{FR}_{26}^{\text{8,0}} $ & $ \text{} $ & 8 & 0 & 142 & 7.62344 & 58.1168 & True & False & False & True \\
 $ \text{FR}_{27}^{\text{8,0}} $ & $ \text{Fib$\times $}\text{PSU}(2)_7$ & 8 & 0 & 150 & 8.34211 & 69.5908 & True & True & False & False \\
 $ \text{FR}_{28}^{\text{8,0}} $ & $ \text{} $ & 8 & 0 & 151 & 8.48528 & 72. & True & False & False & False \\
 $ \text{FR}_{29}^{\text{8,0}} $ & $ \text{} $ & 8 & 0 & 151 & 8.48528 & 72. & True & False & False & False \\
 $ \text{FR}_{30}^{\text{8,0}} $ & $ \text{} $ & 8 & 0 & 160 & 8.84102 & 78.1637 & True & False & True & True \\
 $ \text{FR}_{31}^{\text{8,0}} $ & $ \text{PSU}(2)_{14} $ & 8 & 0 & 176 & 10.2517 & 105.097 & True & False & False & False \\
 $ \text{FR}_{32}^{\text{8,0}} $ & $ \text{} $ & 8 & 0 & 187 & 11.2482 & 126.522 & True & False & False & False \\
 $ \text{FR}_{33}^{\text{8,0}} $ & $ \text{} $ & 8 & 0 & 187 & 11.2482 & 126.522 & True & False & False & False \\
 $ \text{FR}_{34}^{\text{8,0}} $ & $ \text{} $ & 8 & 0 & 194 & 11.3212 & 128.169 & True & False & True & True \\
 $ \text{FR}_{35}^{\text{8,0}} $ & $ \text{} $ & 8 & 0 & 196 & 11.0713 & 122.573 & True & False & False & True \\
 $ \text{FR}_{36}^{\text{8,0}} $ & $ \text{PSU}(2)_{15} $ & 8 & 0 & 204 & 11.2194 & 125.874 & True & True & False & False \\
 $ \text{FR}_{37}^{\text{8,0}} $ & $ \text{} $ & 8 & 0 & 208 & 11.8569 & 140.586 & True & False & False & True \\
 $ \text{FR}_{38}^{\text{8,0}} $ & $ \text{} $ & 8 & 0 & 281 & 14.1819 & 201.126 & True & False & False & True \\
 $ \text{FR}_1^{\text{8,2}} $ & $D_4$ & 8 & 2 & 64 & 2.82843 & 8. & False & False & False & False \\
 $ \text{FR}_2^{\text{8,2}} $ & $ [\mathbb{Z}_3 \trianglelefteq D_3]_{\mathbf{1}|0}^{\text{Id}} $ & 8 & 2 & 72 & 3.4641 & 12. & False & False & False & False \\
 $ \text{FR}_3^{\text{8,2}} $ & $ \text{} $ & 8 & 2 & 76 & 4.07499 & 16.6056 & False & False & False & False \\
 $ \text{FR}_4^{\text{8,2}} $ & $ [\mathbb{Z}_3 \trianglelefteq D_3]_{\mathbf{1}|1}^{\text{Id}} $ & 8 & 2 & 80 & 4.89898 & 24. & False & False & False & False \\
 $ \text{FR}_5^{\text{8,2}} $ & $ \text{} $ & 8 & 2 & 88 & 4.47214 & 20. & True & False & False & False \\
 $ \text{FR}_6^{\text{8,2}} $ & $ \text{} $ & 8 & 2 & 88 & 4.47214 & 20. & False & False & False & False \\
 $ \text{FR}_7^{\text{8,2}} $ & $ \text{} $ & 8 & 2 & 92 & 4.89898 & 24. & True & False & False & False \\
 $ \text{FR}_8^{\text{8,2}} $ & $ \text{} $ & 8 & 2 & 96 & 5.22625 & 27.3137 & False & False & False & False \\
 $ \text{FR}_9^{\text{8,2}} $ & $ \text{} $ & 8 & 2 & 96 & 5.22625 & 27.3137 & False & False & False & False \\
 $ \text{FR}_{10}^{\text{8,2}} $ & $ \text{} $ & 8 & 2 & 96 & 5.22625 & 27.3137 & False & False & False & False \\
 $ \text{FR}_{11}^{\text{8,2}} $ & $ \text{} $ & 8 & 2 & 100 & 5.11667 & 26.1803 & False & False & False & False \\
 $ \text{FR}_{12}^{\text{8,2}} $ & $ \text{} $ & 8 & 2 & 100 & 6.21152 & 38.583 & True & False & False & False \\
 $ \text{FR}_{13}^{\text{8,2}} $ & $ \text{} $ & 8 & 2 & 108 & 6.51602 & 42.4585 & True & False & False & True \\
 $ \text{FR}_{14}^{\text{8,2}} $ & $ \text{} $ & 8 & 2 & 116 & 6. & 36. & True & False & False & False \\
 $ \text{FR}_{15}^{\text{8,2}} $ & $ \text{} $ & 8 & 2 & 116 & 6. & 36. & True & False & False & False \\
 $ \text{FR}_{16}^{\text{8,2}} $ & $ \text{} $ & 8 & 2 & 124 & 8.28637 & 68.6639 & True & False & False & False \\
 $ \text{FR}_{17}^{\text{8,2}} $ & $ \text{} $ & 8 & 2 & 124 & 7.25597 & 52.6491 & True & False & False & False \\
 $ \text{FR}_{18}^{\text{8,2}} $ & $ \text{} $ & 8 & 2 & 124 & 7.25597 & 52.6491 & True & False & False & False \\
 $ \text{FR}_{19}^{\text{8,2}} $ & $ \left.\text{HI}(\mathbb{Z}_4\right) $ & 8 & 2 & 128 & 8.705 & 75.7771 & False & False & False & False \\
 $ \text{FR}_{20}^{\text{8,2}} $ & $ [I \trianglelefteq \mathbb{Z}_2\times \mathbb{Z}_2]_{\mathbf{1}|1}^{( \mathbf{3} \  \mathbf{4} )} $ & 8 & 2 & 128 & 8.705 & 75.7771 & False & False & False & False \\
 $ \text{FR}_{21}^{\text{8,2}} $ & $ \text{} $ & 8 & 2 & 130 & 6.93315 & 48.0685 & True & False & True & True \\
 $ \text{FR}_{22}^{\text{8,2}} $ & $ \text{} $ & 8 & 2 & 142 & 7.62344 & 58.1168 & True & False & False & True \\
 $ \text{FR}_{23}^{\text{8,2}} $ & $ \text{} $ & 8 & 2 & 142 & 7.62344 & 58.1168 & True & False & False & True \\
 $ \text{FR}_{24}^{\text{8,2}} $ & $ \text{} $ & 8 & 2 & 151 & 8.48528 & 72. & True & False & False & False \\
 $ \text{FR}_{25}^{\text{8,2}} $ & $ \text{} $ & 8 & 2 & 151 & 8.48528 & 72. & True & False & False & False \\
 $ \text{FR}_{26}^{\text{8,2}} $ & $ \text{} $ & 8 & 2 & 187 & 11.2482 & 126.522 & True & False & False & False \\
 $ \text{FR}_{27}^{\text{8,2}} $ & $ \text{} $ & 8 & 2 & 187 & 11.2482 & 126.522 & True & False & False & False \\
 $ \text{FR}_{28}^{\text{8,2}} $ & $ \text{} $ & 8 & 2 & 194 & 11.3212 & 128.169 & True & False & True & True \\
 $ \text{FR}_{29}^{\text{8,2}} $ & $ \text{} $ & 8 & 2 & 196 & 11.0713 & 122.573 & False & False & True & False \\
 $ \text{FR}_{30}^{\text{8,2}} $ & $ \text{} $ & 8 & 2 & 208 & 11.8569 & 140.586 & False & False & False & False \\
 $ \text{FR}_1^{\text{8,4}} $ & $ \mathbb{Z}_2\times \mathbb{Z}_4$ & 8 & 4 & 64 & 2.82843 & 8. & True & True & False & False \\
 $ \text{FR}_2^{\text{8,4}} $ & $ [\mathbb{Z}_3 \trianglelefteq D_3]_{\mathbf{2}|0}^{\text{Id}} $ & 8 & 4 & 72 & 3.4641 & 12. & False & False & False & False \\
 $ \text{FR}_3^{\text{8,4}} $ & $ \mathbb{Z}_2\text{$\times $Potts} $ & 8 & 4 & 72 & 3.4641 & 12. & True & False & False & False \\
 $ \text{FR}_4^{\text{8,4}} $ & $ \left.\mathbb{Z}_2\text{$\times $Fib(}\mathbb{Z}_3\right) $ & 8 & 4 & 76 & 4.07499 & 16.6056 & True & False & False & False \\
 $ \text{FR}_5^{\text{8,4}} $ & $ [\mathbb{Z}_3 \trianglelefteq D_3]_{\mathbf{2}|1}^{\text{Id}} $ & 8 & 4 & 80 & 4.89898 & 24. & False & False & False & False \\
 $ \text{FR}_6^{\text{8,4}} $ & $ [\mathbb{Z}_3 \trianglelefteq \mathbb{Z}_6]_{\mathbf{1}|1}^{\text{Id}} $ & 8 & 4 & 80 & 4.89898 & 24. & True & False & False & False \\
 $ \text{FR}_7^{\text{8,4}} $ & $ \text{Fib$\times $}\mathbb{Z}_4$ & 8 & 4 & 80 & 3.80423 & 14.4721 & True & False & False & False \\
 $ \text{FR}_8^{\text{8,4}} $ & $ \text{} $ & 8 & 4 & 88 & 4.47214 & 20. & False & False & False & False \\
 $ \text{FR}_9^{\text{8,4}} $ & $ \text{Fib$\times $Potts} $ & 8 & 4 & 90 & 4.65921 & 21.7082 & True & False & False & False \\
 $ \text{FR}_{10}^{\text{8,4}} $ & $ \left.\text{Fib$\times $Fib(}\mathbb{Z}_3\right) $ & 8 & 4 & 95 & 5.48085 & 30.0397 & True & False & False & False \\
 $ \text{FR}_{11}^{\text{8,4}} $ & $ \text{} $ & 8 & 4 & 96 & 5.22625 & 27.3137 & True & False & False & False \\
 $ \text{FR}_{12}^{\text{8,4}} $ & $ \text{} $ & 8 & 4 & 96 & 5.22625 & 27.3137 & True & False & False & False \\
 $ \text{FR}_{13}^{\text{8,4}} $ & $ \left.\mathbb{Z}_2\text{$\times $(Peudo }\text{PSU}(2)_6\right) $ & 8 & 4 & 96 & 5.22625 & 27.3137 & True & False & False & False \\
 $ \text{FR}_{14}^{\text{8,4}} $ & $ \text{} $ & 8 & 4 & 112 & 6.90169 & 47.6333 & True & False & False & False \\
 $ \text{FR}_{15}^{\text{8,4}} $ & $ \left.\text{Fib$\times $(Peudo }\text{PSU}(2)_6\right) $ & 8 & 4 & 120 & 7.02929 & 49.411 & True & False & False & False \\
 $ \text{FR}_{16}^{\text{8,4}} $ & $ [I \trianglelefteq \mathbb{Z}_4]_{\mathbf{1}|1}^{\text{Id}} $ & 8 & 4 & 128 & 8.705 & 75.7771 & True & False & False & False \\
 $ \text{FR}_{17}^{\text{8,4}} $ & $ [I \trianglelefteq \mathbb{Z}_2\times \mathbb{Z}_2]_{\mathbf{2}|1}^{\text{Id}} $ & 8 & 4 & 128 & 8.705 & 75.7771 & True & False & False & False \\
 $ \text{FR}_{18}^{\text{8,4}} $ & $ \text{} $ & 8 & 4 & 160 & 8.84102 & 78.1637 & True & False & True & False \\
 $ \text{FR}_1^{\text{8,6}} $ & $ \text{Q} $ & 8 & 6 & 64 & 2.82843 & 8. & False & False & False & False \\
 $ \text{FR}_2^{\text{8,6}} $ & $ \mathbb{Z}_8$ & 8 & 6 & 64 & 2.82843 & 8. & True & True & False & False \\
 $ \text{FR}_3^{\text{8,6}} $ & $ \left.\text{TY(}\mathbb{Z}_7\right) $ & 8 & 6 & 70 & 3.74166 & 14. & True & False & False & False \\
 $ \text{FR}_4^{\text{8,6}} $ & $ [\mathbb{Z}_7 \trianglelefteq \mathbb{Z}_7]_{\mathbf{1}|1}^{\text{Id}} $ & 8 & 6 & 71 & 4.14639 & 17.1926 & True & False & False & False \\
 $ \text{FR}_5^{\text{8,6}} $ & $ [\mathbb{Z}_3 \trianglelefteq \mathbb{Z}_6]_{\mathbf{2}|0}^{\text{Id}} $ & 8 & 6 & 72 & 3.4641 & 12. & True & False & False & False \\
 $ \text{FR}_6^{\text{8,6}} $ & $ [\mathbb{Z}_3 \trianglelefteq \mathbb{Z}_6]_{\mathbf{2}|1}^{\text{Id}} $ & 8 & 6 & 80 & 4.89898 & 24. & True & False & False & False \\
 $ \text{FR}_7^{\text{8,6}} $ & $ \text{} $ & 8 & 6 & 96 & 5.22625 & 27.3137 & False & False & False & False \\
 $ \text{FR}_8^{\text{8,6}} $ & $ \text{} $ & 8 & 6 & 96 & 5.22625 & 27.3137 & True & False & False & False \\
 $ \text{FR}_9^{\text{8,6}} $ & $ [I \trianglelefteq \mathbb{Z}_4]_{\mathbf{2}|1}^{( \mathbf{3} \  \mathbf{4} )} $ & 8 & 6 & 128 & 8.705 & 75.7771 & False & False & False & False \\
 $ \text{FR}_{10}^{\text{8,6}} $ & $ [I \trianglelefteq \mathbb{Z}_4]_{\mathbf{3}|1}^{\text{Id}} $ & 8 & 6 & 128 & 8.705 & 75.7771 & True & False & False & False \\
 $ \text{FR}_1^{\text{9,0}} $ & $ \text{TY}\left((\mathbb{Z}_2)^{\times 3}\right) $ & 9 & 0 & 88 & 4. & 16. & True & False & False & False \\
 $ \text{FR}_2^{\text{9,0}} $ & $ [\left((\mathbb{Z}_2)^{\times 3}\right) \trianglelefteq \left((\mathbb{Z}_2)^{\times 3}\right)]_{\mathbf{1}|1}^{\text{Id}} $ & 9 & 0 & 89 & 4.4014 & 19.3723 & True & False & False & False \\
 $ \text{FR}_3^{\text{9,0}} $ & $ \text{Ising$\times $Ising} $ & 9 & 0 & 100 & 4. & 16. & True & True & False & False \\
 $ \text{FR}_4^{\text{9,0}} $ & $ \left.\text{Ising$\times $Rep(}D_3\right) $ & 9 & 0 & 110 & 4.89898 & 24. & True & False & False & False \\
 $ \text{FR}_5^{\text{9,0}} $ & $ \text{} $ & 9 & 0 & 116 & 4.89898 & 24. & True & False & False & False \\
 $ \text{FR}_6^{\text{9,0}} $ & $ \left.\text{Rep(}D_3\text{)$\times $Rep(}D_3\right) $ & 9 & 0 & 121 & 6. & 36. & True & False & False & False \\
 $ \text{FR}_7^{\text{9,0}} $ & $ \text{} $ & 9 & 0 & 124 & 5.65685 & 32. & True & False & False & False \\
 $ \text{FR}_8^{\text{9,0}} $ & $ \text{} $ & 9 & 0 & 128 & 6.22451 & 38.7446 & True & False & False & True \\
 $ \text{FR}_9^{\text{9,0}} $ & $ \text{} $ & 9 & 0 & 132 & 6.15276 & 37.8564 & True & False & False & False \\
 $ \text{FR}_{10}^{\text{9,0}} $ & $ \text{} $ & 9 & 0 & 132 & 6.9282 & 48. & True & False & False & False \\
 $ \text{FR}_{11}^{\text{9,0}} $ & $ \text{} $ & 9 & 0 & 132 & 6.15276 & 37.8564 & True & False & True & True \\
 $ \text{FR}_{12}^{\text{9,0}} $ & $ \text{} $ & 9 & 0 & 137 & 5.47723 & 30. & True & False & False & False \\
 $ \text{FR}_{13}^{\text{9,0}} $ & $ \text{} $ & 9 & 0 & 140 & 7.03925 & 49.551 & True & False & True & True \\
 $ \text{FR}_{14}^{\text{9,0}} $ & $ \text{Ising$\times $}\text{PSU}(2)_5$ & 9 & 0 & 140 & 6.09783 & 37.1836 & True & True & False & False \\
 $ \text{FR}_{15}^{\text{9,0}} $ & $ \text{} $ & 9 & 0 & 143 & 7.74597 & 60. & True & False & False & False \\
 $ \text{FR}_{16}^{\text{9,0}} $ & $ \text{} $ & 9 & 0 & 148 & 7.66411 & 58.7386 & True & False & False & True \\
 $ \text{FR}_{17}^{\text{9,0}} $ & $ \text{} $ & 9 & 0 & 151 & 7.63029 & 58.2213 & True & False & True & True \\
 $ \text{FR}_{18}^{\text{9,0}} $ & $ \text{} $ & 9 & 0 & 151 & 7.63029 & 58.2213 & True & False & False & True \\
 $ \text{FR}_{19}^{\text{9,0}} $ & $ \text{} $ & 9 & 0 & 151 & 6.63325 & 44. & True & True & False & False \\
 $ \text{FR}_{20}^{\text{9,0}} $ & $ \text{Rep(}D_3\text{)$\times $}\text{PSU}(2)_5$ & 9 & 0 & 154 & 7.46829 & 55.7754 & True & False & False & False \\
 $ \text{FR}_{21}^{\text{9,0}} $ & $ \text{} $ & 9 & 0 & 155 & 7.19084 & 51.7082 & True & False & False & True \\
 $ \text{FR}_{22}^{\text{9,0}} $ & $ \text{} $ & 9 & 0 & 156 & 8.48528 & 72. & True & False & False & False \\
 $ \text{FR}_{23}^{\text{9,0}} $ & $ \text{} $ & 9 & 0 & 156 & 8.62364 & 74.3672 & True & False & False & False \\
 $ \text{FR}_{24}^{\text{9,0}} $ & $ \text{} $ & 9 & 0 & 159 & 7.86488 & 61.8564 & True & False & False & False \\
 $ \text{FR}_{25}^{\text{9,0}} $ & $ \text{} $ & 9 & 0 & 163 & 6.9282 & 48. & True & False & True & True \\
 $ \text{FR}_{26}^{\text{9,0}} $ & $ \text{} $ & 9 & 0 & 164 & 9.33766 & 87.1918 & True & False & False & False \\
 $ \text{FR}_{27}^{\text{9,0}} $ & $ \text{SU}(2)_8$ & 9 & 0 & 165 & 7.23607 & 52.3607 & True & True & False & False \\
 $ \text{FR}_{28}^{\text{9,0}} $ & $ \text{} $ & 9 & 0 & 179 & 7.74597 & 60. & True & False & True & True \\
 $ \text{FR}_{29}^{\text{9,0}} $ & $ \text{} $ & 9 & 0 & 185 & 10.4077 & 108.321 & True & False & False & False \\
 $ \text{FR}_{30}^{\text{9,0}} $ & $ \text{} $ & 9 & 0 & 185 & 8.37915 & 70.2101 & True & False & True & True \\
 $ \text{FR}_{31}^{\text{9,0}} $ & $ \text{} $ & 9 & 0 & 191 & 9.38658 & 88.108 & True & False & True & True \\
 $ \text{FR}_{32}^{\text{9,0}} $ & $ \text{} $ & 9 & 0 & 195 & 8.78442 & 77.166 & True & False & True & True \\
 $ \text{FR}_{33}^{\text{9,0}} $ & $ \text{} $ & 9 & 0 & 195 & 8.78442 & 77.166 & True & False & True & True \\
 $ \text{FR}_{34}^{\text{9,0}} $ & $ \text{PSU}(2)_5\times \text{PSU}(2)_5$ & 9 & 0 & 196 & 9.2959 & 86.4137 & True & True & False & False \\
 $ \text{FR}_{35}^{\text{9,0}} $ & $ \text{} $ & 9 & 0 & 200 & 11.6179 & 134.976 & True & False & False & False \\
 $ \text{FR}_{36}^{\text{9,0}} $ & $ \text{} $ & 9 & 0 & 205 & 9.8961 & 97.9329 & True & False & True & True \\
 $ \text{FR}_{37}^{\text{9,0}} $ & $ \text{} $ & 9 & 0 & 211 & 10.0233 & 100.467 & True & False & False & True \\
 $ \text{FR}_{38}^{\text{9,0}} $ & $ \text{} $ & 9 & 0 & 212 & 10.4398 & 108.99 & True & False & False & False \\
 $ \text{FR}_{39}^{\text{9,0}} $ & $ \text{} $ & 9 & 0 & 218 & 10.5315 & 110.912 & True & False & True & True \\
 $ \text{FR}_{40}^{\text{9,0}} $ & $ \text{} $ & 9 & 0 & 227 & 11.4279 & 130.596 & True & False & False & False \\
 $ \text{FR}_{41}^{\text{9,0}} $ & $ \text{PSU}(2)_{16} $ & 9 & 0 & 249 & 12.2162 & 149.235 & True & False & False & False \\
 $ \text{FR}_{42}^{\text{9,0}} $ & $ \text{} $ & 9 & 0 & 256 & 11.7082 & 137.082 & True & False & True & True \\
 $ \text{FR}_{43}^{\text{9,0}} $ & $ \text{} $ & 9 & 0 & 274 & 13.401 & 179.586 & True & False & True & True \\
 $ \text{FR}_{44}^{\text{9,0}} $ & $ \text{PSU}(2)_{17} $ & 9 & 0 & 285 & 13.2413 & 175.333 & True & True & False & False \\
 $ \text{FR}_{45}^{\text{9,0}} $ & $ \text{} $ & 9 & 0 & 319 & 15.0837 & 227.519 & True & False & False & True \\
 $ \text{FR}_{46}^{\text{9,0}} $ & $ \text{} $ & 9 & 0 & 425 & 17.8357 & 318.114 & True & False & False & True \\
 $ \text{FR}_1^{\text{9,2}} $ & $ \text{TY}(D_4) $ & 9 & 2 & 88 & 4. & 16. & False & False & False & False \\
 $ \text{FR}_2^{\text{9,2}} $ & $ [D_4 \trianglelefteq D_4]_{\mathbf{1}|1}^{\text{Id}} $ & 9 & 2 & 89 & 4.4014 & 19.3723 & False & False & False & False \\
 $ \text{FR}_3^{\text{9,2}} $ & $ \text{} $ & 9 & 2 & 100 & 4. & 16. & True & False & False & False \\
 $ \text{FR}_4^{\text{9,2}} $ & $ \text{} $ & 9 & 2 & 108 & 5.44702 & 29.67 & False & False & True & False \\
 $ \text{FR}_5^{\text{9,2}} $ & $ \text{} $ & 9 & 2 & 110 & 4.89898 & 24. & True & False & False & False \\
 $ \text{FR}_6^{\text{9,2}} $ & $ \text{} $ & 9 & 2 & 116 & 4.89898 & 24. & True & False & False & False \\
 $ \text{FR}_7^{\text{9,2}} $ & $ \text{} $ & 9 & 2 & 116 & 4.89898 & 24. & False & False & False & False \\
 $ \text{FR}_8^{\text{9,2}} $ & $ \text{} $ & 9 & 2 & 116 & 4.89898 & 24. & True & False & False & False \\
 $ \text{FR}_9^{\text{9,2}} $ & $ \text{} $ & 9 & 2 & 124 & 5.65685 & 32. & True & False & False & False \\
 $ \text{FR}_{10}^{\text{9,2}} $ & $ \text{} $ & 9 & 2 & 124 & 5.65685 & 32. & True & False & False & False \\
 $ \text{FR}_{11}^{\text{9,2}} $ & $ \text{} $ & 9 & 2 & 128 & 6.22451 & 38.7446 & True & False & False & True \\
 $ \text{FR}_{12}^{\text{9,2}} $ & $ \text{} $ & 9 & 2 & 132 & 6.15276 & 37.8564 & True & False & False & False \\
 $ \text{FR}_{13}^{\text{9,2}} $ & $ \text{} $ & 9 & 2 & 132 & 6.15276 & 37.8564 & True & False & False & False \\
 $ \text{FR}_{14}^{\text{9,2}} $ & $ \text{} $ & 9 & 2 & 132 & 6.9282 & 48. & True & False & False & False \\
 $ \text{FR}_{15}^{\text{9,2}} $ & $ \text{} $ & 9 & 2 & 132 & 6.9282 & 48. & True & False & False & False \\
 $ \text{FR}_{16}^{\text{9,2}} $ & $ \text{} $ & 9 & 2 & 132 & 6.15276 & 37.8564 & True & False & True & True \\
 $ \text{FR}_{17}^{\text{9,2}} $ & $ \text{} $ & 9 & 2 & 140 & 7.03925 & 49.551 & True & False & True & True \\
 $ \text{FR}_{18}^{\text{9,2}} $ & $ \text{} $ & 9 & 2 & 140 & 7.03925 & 49.551 & True & False & True & True \\
 $ \text{FR}_{19}^{\text{9,2}} $ & $ \text{} $ & 9 & 2 & 143 & 7.74597 & 60. & True & False & False & False \\
 $ \text{FR}_{20}^{\text{9,2}} $ & $ \text{} $ & 9 & 2 & 143 & 7.74597 & 60. & True & False & False & False \\
 $ \text{FR}_{21}^{\text{9,2}} $ & $ \text{} $ & 9 & 2 & 148 & 7.66411 & 58.7386 & True & False & False & True \\
 $ \text{FR}_{22}^{\text{9,2}} $ & $ \text{} $ & 9 & 2 & 148 & 7.66411 & 58.7386 & False & False & False & False \\
 $ \text{FR}_{23}^{\text{9,2}} $ & $ \text{} $ & 9 & 2 & 148 & 7.66411 & 58.7386 & True & False & False & True \\
 $ \text{FR}_{24}^{\text{9,2}} $ & $ \text{} $ & 9 & 2 & 151 & 7.63029 & 58.2213 & True & False & True & True \\
 $ \text{FR}_{25}^{\text{9,2}} $ & $ \text{} $ & 9 & 2 & 151 & 7.63029 & 58.2213 & True & False & False & True \\
 $ \text{FR}_{26}^{\text{9,2}} $ & $ \text{} $ & 9 & 2 & 151 & 6.63325 & 44. & True & False & False & False \\
 $ \text{FR}_{27}^{\text{9,2}} $ & $ \text{} $ & 9 & 2 & 156 & 8.62364 & 74.3672 & True & False & False & False \\
 $ \text{FR}_{28}^{\text{9,2}} $ & $ \text{} $ & 9 & 2 & 156 & 8.62364 & 74.3672 & True & False & False & False \\
 $ \text{FR}_{29}^{\text{9,2}} $ & $ \text{} $ & 9 & 2 & 159 & 7.86488 & 61.8564 & True & False & False & False \\
 $ \text{FR}_{30}^{\text{9,2}} $ & $ \text{} $ & 9 & 2 & 163 & 6.9282 & 48. & True & False & True & True \\
 $ \text{FR}_{31}^{\text{9,2}} $ & $ \text{} $ & 9 & 2 & 164 & 9.33766 & 87.1918 & True & False & False & False \\
 $ \text{FR}_{32}^{\text{9,2}} $ & $ \text{} $ & 9 & 2 & 179 & 7.74597 & 60. & False & False & True & False \\
 $ \text{FR}_{33}^{\text{9,2}} $ & $ \text{} $ & 9 & 2 & 179 & 7.74597 & 60. & True & False & True & True \\
 $ \text{FR}_{34}^{\text{9,2}} $ & $ \text{} $ & 9 & 2 & 195 & 8.78442 & 77.166 & True & False & True & True \\
 $ \text{FR}_{35}^{\text{9,2}} $ & $ \text{} $ & 9 & 2 & 195 & 8.78442 & 77.166 & True & False & True & True \\
 $ \text{FR}_{36}^{\text{9,2}} $ & $ \text{} $ & 9 & 2 & 200 & 11.6179 & 134.976 & False & False & False & False \\
 $ \text{FR}_{37}^{\text{9,2}} $ & $ \text{} $ & 9 & 2 & 200 & 11.6179 & 134.976 & False & False & False & False \\
 $ \text{FR}_{38}^{\text{9,2}} $ & $ \text{} $ & 9 & 2 & 211 & 10.0233 & 100.467 & False & False & False & False \\
 $ \text{FR}_{39}^{\text{9,2}} $ & $ \text{} $ & 9 & 2 & 211 & 10.0233 & 100.467 & True & False & False & True \\
 $ \text{FR}_{40}^{\text{9,2}} $ & $ \text{} $ & 9 & 2 & 227 & 11.4279 & 130.596 & True & False & False & False \\
 $ \text{FR}_{41}^{\text{9,2}} $ & $ \text{} $ & 9 & 2 & 256 & 11.7082 & 137.082 & False & False & True & False \\
 $ \text{FR}_{42}^{\text{9,2}} $ & $ \text{} $ & 9 & 2 & 256 & 11.7082 & 137.082 & True & False & True & True \\
 $ \text{FR}_{43}^{\text{9,2}} $ & $ \text{} $ & 9 & 2 & 274 & 13.401 & 179.586 & True & False & True & True \\
 $ \text{FR}_{44}^{\text{9,2}} $ & $ \text{} $ & 9 & 2 & 319 & 15.0837 & 227.519 & False & False & False & False \\
 $ \text{FR}_{45}^{\text{9,2}} $ & $ \text{} $ & 9 & 2 & 319 & 15.0837 & 227.519 & True & False & False & True \\
 $ \text{FR}_1^{\text{9,4}} $ & $ \left.\text{TY(}\mathbb{Z}_2\times \mathbb{Z}_4\right) $ & 9 & 4 & 88 & 4. & 16. & True & False & False & False \\
 $ \text{FR}_2^{\text{9,4}} $ & $ [\mathbb{Z}_2\times \mathbb{Z}_4 \trianglelefteq \mathbb{Z}_2\times \mathbb{Z}_4]_{\mathbf{1}|1}^{\text{Id}} $ & 9 & 4 & 89 & 4.4014 & 19.3723 & True & False & False & False \\
 $ \text{FR}_3^{\text{9,4}} $ & $ [\mathbb{Z}_2 \trianglelefteq \mathbb{Z}_6]_{\mathbf{1}|0}^{( \mathbf{2} \  \mathbf{3} )} $ & 9 & 4 & 90 & 3.4641 & 12. & False & False & False & False \\
 $ \text{FR}_4^{\text{9,4}} $ & $ \text{} $ & 9 & 4 & 100 & 4. & 16. & False & False & False & False \\
 $ \text{FR}_5^{\text{9,4}} $ & $ \text{} $ & 9 & 4 & 100 & 4. & 16. & True & False & False & False \\
 $ \text{FR}_6^{\text{9,4}} $ & $ \text{} $ & 9 & 4 & 108 & 5.44702 & 29.67 & False & False & True & False \\
 $ \text{FR}_7^{\text{9,4}} $ & $ \text{} $ & 9 & 4 & 108 & 5.44702 & 29.67 & True & False & True & False \\
 $ \text{FR}_8^{\text{9,4}} $ & $ \text{} $ & 9 & 4 & 116 & 4.89898 & 24. & False & False & False & False \\
 $ \text{FR}_9^{\text{9,4}} $ & $ \text{} $ & 9 & 4 & 116 & 4.89898 & 24. & False & False & False & False \\
 $ \text{FR}_{10}^{\text{9,4}} $ & $ \text{} $ & 9 & 4 & 116 & 4.89898 & 24. & True & False & False & False \\
 $ \text{FR}_{11}^{\text{9,4}} $ & $ [\mathbb{Z}_2 \trianglelefteq \mathbb{Z}_6]_{\mathbf{1}|1}^{( \mathbf{2} \  \mathbf{3} )} $ & 9 & 4 & 117 & 6.63732 & 44.054 & False & False & False & False \\
 $ \text{FR}_{12}^{\text{9,4}} $ & $ \text{} $ & 9 & 4 & 124 & 5.65685 & 32. & True & False & False & False \\
 $ \text{FR}_{13}^{\text{9,4}} $ & $ \text{} $ & 9 & 4 & 132 & 6.15276 & 37.8564 & True & False & False & False \\
 $ \text{FR}_{14}^{\text{9,4}} $ & $ \text{} $ & 9 & 4 & 132 & 6.9282 & 48. & True & False & False & False \\
 $ \text{FR}_{15}^{\text{9,4}} $ & $ \text{} $ & 9 & 4 & 140 & 7.03925 & 49.551 & True & False & True & True \\
 $ \text{FR}_{16}^{\text{9,4}} $ & $ \text{} $ & 9 & 4 & 143 & 7.74597 & 60. & True & False & False & False \\
 $ \text{FR}_{17}^{\text{9,4}} $ & $ \text{} $ & 9 & 4 & 148 & 7.66411 & 58.7386 & False & False & False & False \\
 $ \text{FR}_{18}^{\text{9,4}} $ & $ \text{} $ & 9 & 4 & 148 & 7.66411 & 58.7386 & True & False & False & True \\
 $ \text{FR}_{19}^{\text{9,4}} $ & $ \text{} $ & 9 & 4 & 156 & 8.62364 & 74.3672 & True & False & False & False \\
 $ \text{FR}_{20}^{\text{9,4}} $ & $ \text{} $ & 9 & 4 & 163 & 6.9282 & 48. & False & False & True & False \\
 $ \text{FR}_{21}^{\text{9,4}} $ & $ \text{} $ & 9 & 4 & 163 & 6.9282 & 48. & True & False & True & True \\
 $ \text{FR}_{22}^{\text{9,4}} $ & $ \text{} $ & 9 & 4 & 164 & 9.33766 & 87.1918 & False & False & False & False \\
 $ \text{FR}_{23}^{\text{9,4}} $ & $ \text{} $ & 9 & 4 & 164 & 9.33766 & 87.1918 & True & False & False & False \\
 $ \text{FR}_{24}^{\text{9,4}} $ & $ \text{} $ & 9 & 4 & 165 & 7.23607 & 52.3607 & True & False & False & False \\
 $ \text{FR}_{25}^{\text{9,4}} $ & $ \text{} $ & 9 & 4 & 185 & 10.4077 & 108.321 & True & False & False & False \\
 $ \text{FR}_{26}^{\text{9,4}} $ & $ \text{} $ & 9 & 4 & 195 & 8.78442 & 77.166 & True & False & True & True \\
 $ \text{FR}_{27}^{\text{9,4}} $ & $ \text{} $ & 9 & 4 & 198 & 9.71016 & 94.2873 & False & False & True & False \\
 $ \text{FR}_{28}^{\text{9,4}} $ & $ \text{} $ & 9 & 4 & 200 & 11.6179 & 134.976 & True & False & False & False \\
 $ \text{FR}_{29}^{\text{9,4}} $ & $ \text{} $ & 9 & 4 & 200 & 11.6179 & 134.976 & True & False & False & False \\
 $ \text{FR}_{30}^{\text{9,4}} $ & $ \text{} $ & 9 & 4 & 205 & 9.8961 & 97.9329 & True & False & True & True \\
 $ \text{FR}_{31}^{\text{9,4}} $ & $ \text{} $ & 9 & 4 & 227 & 11.4279 & 130.596 & False & False & False & False \\
 $ \text{FR}_{32}^{\text{9,4}} $ & $ \text{} $ & 9 & 4 & 227 & 11.4279 & 130.596 & True & False & False & False \\
 $ \text{FR}_{33}^{\text{9,4}} $ & $ \text{} $ & 9 & 4 & 319 & 15.0837 & 227.519 & False & False & False & False \\
 $ \text{FR}_1^{\text{9,6}} $ & $ \text{TY}(Q) $ & 9 & 6 & 88 & 4. & 16. & False & False & False & False \\
 $ \text{FR}_2^{\text{9,6}} $ & $ \left.\text{TY(}\mathbb{Z}_8\right) $ & 9 & 6 & 88 & 4. & 16. & True & False & False & False \\
 $ \text{FR}_3^{\text{9,6}} $ & $ [\text{Q} \trianglelefteq \text{Q}]_{\mathbf{1}|1}^{\text{Id}} $ & 9 & 6 & 89 & 4.4014 & 19.3723 & False & False & False & False \\
 $ \text{FR}_4^{\text{9,6}} $ & $ [\mathbb{Z}_8 \trianglelefteq \mathbb{Z}_8]_{\mathbf{1}|1}^{\text{Id}} $ & 9 & 6 & 89 & 4.4014 & 19.3723 & True & False & False & False \\
 $ \text{FR}_5^{\text{9,6}} $ & $ \text{Ising$\times $}\mathbb{Z}_3$ & 9 & 6 & 90 & 3.4641 & 12. & True & False & False & False \\
 $ \text{FR}_6^{\text{9,6}} $ & $ \text{Rep(}D_3\text{)$\times $}\mathbb{Z}_3$ & 9 & 6 & 99 & 4.24264 & 18. & True & False & False & False \\
 $ \text{FR}_7^{\text{9,6}} $ & $ \text{} $ & 9 & 6 & 108 & 5.44702 & 29.67 & True & False & True & True \\
 $ \text{FR}_8^{\text{9,6}} $ & $ \text{} $ & 9 & 6 & 108 & 5.32844 & 28.3923 & True & False & False & False \\
 $ \text{FR}_9^{\text{9,6}} $ & $ [\mathbb{Z}_2 \trianglelefteq \mathbb{Z}_6]_{\mathbf{1}|1}^{\text{Id}} $ & 9 & 6 & 117 & 6.63732 & 44.054 & True & False & False & False \\
 $ \text{FR}_{10}^{\text{9,6}} $ & $ \text{PSU}(2)_5\times \mathbb{Z}_3$ & 9 & 6 & 126 & 5.28088 & 27.8877 & True & True & False & False \\
 $ \text{FR}_{11}^{\text{9,6}} $ & $ \text{} $ & 9 & 6 & 162 & 7.42757 & 55.1689 & True & False & True & False \\
 $ \text{FR}_{12}^{\text{9,6}} $ & $ \text{} $ & 9 & 6 & 198 & 9.71016 & 94.2873 & True & False & True & True \\
 $ \text{FR}_{13}^{\text{9,6}} $ & $ \text{} $ & 9 & 6 & 200 & 11.6179 & 134.976 & False & False & False & False \\
 $ \text{FR}_{14}^{\text{9,6}} $ & $ \text{} $ & 9 & 6 & 200 & 11.6179 & 134.976 & True & False & False & False \\
 $ \text{FR}_{15}^{\text{9,6}} $ & $ \text{} $ & 9 & 6 & 216 & 11.3115 & 127.95 & True & False & False & True \\
 $ \text{FR}_{16}^{\text{9,6}} $ & $ \text{} $ & 9 & 6 & 243 & 12.7817 & 163.373 & True & False & False & False \\
 $ \text{FR}_1^{\text{9,8}} $ & $ \mathbb{Z}_9$ & 9 & 8 & 81 & 3. & 9. & True & True & False & False \\
 $ \text{FR}_2^{\text{9,8}} $ & $ \mathbb{Z}_3\times \mathbb{Z}_3$ & 9 & 8 & 81 & 3. & 9. & True & True & False & False \\
 \hline
\end{longtable}

\end{document}